\newtheorem{theorem}{Theorem}
\newtheorem*{theorem*}{Theorem}
\newtheorem*{definition}{Definition}
\newtheorem*{assumption}{Assumption}
\newtheorem*{example*}{Example}
\newcommand{\R}{{\mathbb{R}}}
\newcommand{\bs}{\boldsymbol}
\newcommand{\E}{\mathbb{E}}
\def\Z{{\bf Z}}
\def\argmin{\operatorname*{\mathsf{arg\,min}}}
\def\argmax{\operatorname*{\mathsf{arg\,max}}}
\def\U{\mathbb{U}}
\def\V{\mathbb{V}}
\def\M{\mathcal{M}}
\def\Mbar{\bar{\mathcal{M}}}
\def\Mcoll{\mathfrak{M}}
\def\Xfrac{\mathfrak{X}}
\def\Rnorm{\Phi}
\def\L{\mathcal{L}}
\def\Row{\text{row}}
\def\Col{\text{col}}
\def\Rank{\text{rank}}
\def\Dim{\text{dim}}
\def\McollRt{\Mcoll_{\psi_n^2}}
\def\diag{\operatorname*{diag}}
\def\trace{\operatorname*{trace}}
\newcommand{\vertiii}[1]{{\left\vert\kern-0.25ex\left\vert\kern-0.25ex\left\vert #1 
    \right\vert\kern-0.25ex\right\vert\kern-0.25ex\right\vert}}
\newcommand{\X}{\mathbf{X}}
\title{Generalized Information Criteria for Structured Sparse Models}
\author[1]{Eduardo F. Mendes\thanks{Corresponding author: eduardo.mendes@fgv.br}}
\author[2]{Gabriel J. P. Pinto}
\affil[1]{São Paulo School of Economics, Fundação Getulio Vargas}
\affil[2]{Department of Statistics, Northwestern University}
\date{}                     
\begin{document}
\maketitle
\begin{abstract}%
    Regularized $m$-estimators are widely used due to their ability of recovering a low-dimensional model in high-dimensional scenarios. Some recent efforts on this subject focused on creating a unified framework for establishing oracle bounds, and deriving conditions for support recovery. 
    Under this same framework, we propose a new Generalized Information Criteria (GIC) that takes into consideration the sparsity pattern one wishes to recover. 
    We obtain non-asymptotic model selection bounds and sufficient conditions for model selection consistency of the GIC. 
    Furthermore, we show that the GIC can also be used for selecting the regularization parameter within a regularized $m$-estimation framework, which allows practical use of the GIC for model selection in high-dimensional scenarios. 
    We provide examples of group LASSO in the context of generalized linear regression and low rank matrix regression.

\end{abstract}

\section{Introduction}

\subsection{Overview} 
I this paper we derive a family of Generalized Information Criteria tailored for \emph{structured sparse} models and show its statistical properties such as finite sample selection bounds and path-consistency. \emph{Sparsity} is usually related to high dimensional models, but one can also think about it in low dimension settings. In fact, model selection is frequently related to selecting a sparse model in some sense: some of the variables, or groups of variables, are not included or we estimate the number of factors on a factor regression model. What we learn from high-dimensional statistics literature is that the sparsity pattern is often connected to regularizing norms, such as $L_1$ for variable selection in linear regressions, $L_{2,1}$ for selecting groups of variables and the nuclear norm for shrinking eigenvalues of a low rank matrix. These three examples illustrate how sparsity can be structured in a sense that we have deterministic information about the model that is embeded in the collection of models we are choosing from. 

Model selection in high dimensions poses a challenge as the number of candidate models is very large. In these cases one indeed uses structured sparse norms to help estimating models, and controls the sparsity using a regularization parameter, i.e., indexes the candidate models along a path. We call path consistency the property that there exists a regularization parameter that indeed corresponds to teh target model. We show how to construct this sequence and that is indeed contains the target model. As we combine both cases we indeed can use the GIC to select the regularization parameter in high dimensions, meaning that our results are valid in a broad sense.

We study both asymptotic and non-asymptotic properties of generalized information criteria when the model has a structured sparsity pattern. The method contemplates situations where the ambient dimension of the problem $p$ is larger, in order, than the sample size $n$. In such cases, we will select the regularization parameter of a regularized $m$-estimation problem. This class of problems is the concern of high-dimensional statistics and frequently arises in economics and finance, as well as astronomy and biology, among other areas. 

Estimating high-dimensional sparse models is infeasible unless one imposes low-dimensional constraints on the model space. The most popular example is variable selection within a linear regression problem where only a small number (${s\ll p}$) of variables are relevant. In this case, sparsity is unstructured as any variable can individually enter or not the model. In structured sparsity there is a natural structure to the problem. For instance, in a group sparse regression, we select variables in groups, whereas in low rank matrix estimation individual elements are not necessarily sparse, even if its singular values are. These examples illustrate what we call \emph{structured sparsity}.

We estimate model parameters by minimizing some empirical loss function that incorporates a structured sparsity penalty. As such, we are within the regularized $m$-estimation framework. Our goal is to select an adequate model subspace from a large collection of candidate models, satisfying sparsity constraints. 

For a comprehensive exposition of the area of high dimensional statistics, including methods, algorithms and theory, see the books \cite{buhlmann2011statistics,tibshirani2015statistical} and, for the non-asymptotic perspective adopted in the paper, see \cite{wainwright2019High}.

\subsection{Literature review}
The Akaike Information Criterion (AIC) \citep{AkaikeAIC}, cross-validation  \citep{stone1974,cravenwhaba1978}, and Schwarz Bayesian Information Criterion (BIC) \citep{schwarz1978} are traditional model selection methods. However, they are not consistent in selecting large model spaces \cite{bromanspeed2002,casellaetal2009,wangetal2009,zhangetal2010,shetran2019}. Several methods have been proposed to address this limitation in high-dimensional settings.

For instance, \citet{chenchen2008,chenchen2012} argue that BIC's uniform prior on candidate models causes over-selection and propose the Extended BIC (EBIC), which uses an alternative prior distribution on model subspaces for variable selection. \citet{wangetal2009} suggest adding a penalty term to BIC to account for candidate model inflation, which leads to the Generalized Information Criterion (GIC). Some authors propose alternative methods, such as \citeauthor{gaosong2010}'s \citep{gaosong2010} Composite Likelihood BIC, \citeauthor{zhangshen2010}'s \citep{zhangshen2010} Corrected Risk Inflation Criterion, and \citeauthor{shetran2019}'s \citep{shetran2019} new cross-validation scheme for sparse reduced rank regression that achieves optimal rate.

\citet{kimjeon2016} extend previous results beyond the quadratic loss or negative log-likelihood and demonstrate that GIC consistently selects the correct model for a broad range of loss functions.

In practice, it is often impossible to enumerate all candidate model subspaces in high-dimensional settings. Regularized $m$-estimators construct a path of candidate models indexed by a tuning parameter or regularization parameter, which may depend on the model space's complexity and sample size. These include LASSO or basis pursuit \cite{tibshirani1996,chenetal2001}, SCAD \cite{fan2001variable}, adaptive LASSO \cite{zou2006}, group LASSO \cite{yuanlin2006}, graphical LASSO for graphs and inverse covariances \cite{nMpB2006,yuanlin2007,friedmanetal2008}, among others.

Numerous studies propose consistent model selection criteria for the regularization parameter. A method is path-consistent if the solution path indexed by the regularization parameter contains the true model, and the selection criteria are consistent. Most often, the proposed criteria use either the quadratic loss or Kullback-Leibler loss (negative log-likelihood function) with LASSO, adaptive LASSO, or SCAD regularization functions. \citet{chenchen2008,chenchen2012} use the EBIC within a regularized maximum likelihood framework with SCAD and LASSO penalties. \citet{wangetal2007,wangetal2009} show that their GIC consistently selects the correct model using the quadratic loss and LASSO or SCAD penalties.\citet{zhangshen2010} propose the high-dimensional BIC (HBIC) for selecting the regularization parameter in sparse regression problems with quadratic loss and SCAD penalty.

Other studies focus on specific families of models. For example, \citet{nicai2018} study selecting the tuning parameter in a penalized Cox proportional hazard model with SCAD penalty, \citet{foygeldrton2010} derive an EBIC for Gaussian graphical models, and \citet{leeetal2014} derive an EBIC for penalized quantile regression. \citet{kimjeon2016} consider a broader class of loss functions called quadratically supported risk and show path-consistency of their GIC for the LASSO and SCAD under a penalized least squares framework.

\subsection{Contribution}
Current high-dimensional model selection methods do not take into account the \emph{sparsity pattern} of the problem. In this paper we develop a formulation of the Generalized Information Criterion that accommodates the sparsity pattern, develop its non-asymptotic selection bound, and demonstrate its consistency and path-consistency for recovering the sparsity pattern. We apply our approach to group sparse GLM estimation and low rank matrix estimation, which have been studied in the literature, and connect our assumptions to those used to establish oracle bounds.

As motivation, consider three matrices with distinct sparsity constraints: (1) low rank, (2) element-wise sparsity, and (3) group row sparsity. In (1) the entries of the matrix may not be sparse in the traditional element-wise way, whereas in (2) the matrix may be full rank. Similarly, in group row sparsity (3) we shrink entire rows of the matrix while in element-wise sparsity (1) individual elements will be null within the same row. Information about the \textit{sparsity pattern} is included in our estimation problem through an appropriate regularization norm. Even though the same loss used in all problems, the regularizing penalties are quite different. In (1) we use the nuclear norm, in (2) the element-wise $L_1$ norm, and in (3) the $L_1/L_2$ row norm, often used in group-LASSO. 

In this paper we address the following questions: (a) how should one incorporate the \emph{sparsity pattern} in the model selection criterion? (b) can this information criterion be used to to select the regularization parameter $\lambda$? (c) what are the finite sample and asymptotic behaviors of the method?

\citet{sNpRmWbY2012} introduce a unified framework for establishing properties of regularized $m$-estimators in high-dimensions. The authors identify two key properties that guide convergence rates: \emph{restricted strong convexity} and \emph{decomposability}. The restricted strong convexity depends on the interaction of the regularizer with the loss function. The decomposability property of the regularizer requires the regularizing norm to be additive in a pair of of orthogonal subspaces. Authors derive sharp bounds on the error norm under these two key properties and other rate restrictions.\footnote{Traditionally $L_2$ for vectors and Frobenius norm for matrices.} Results within this framework are restricted to regularization functions that are norms, such as the LASSO, Group LASSO, low lank recovery, among other, and does not include the SCAD and Elastic Net penalties.

Our formulation of the Generalized Information Criterion accounts for the sparsity pattern in the model. We define a penalty that depends on a \emph{subspace compatibility constant} and some decreasing sequence that is related to the regularization norm used to capture the sparsity. We demonstrate that our proposed GIC is model selection consistent and also path-consistent for selecting the regularization parameter $\lambda$. Our conditions are connected to those in \citet{kimjeon2016} and \citet{sNpRmWbY2012}, and are readily satisfied by many loss functions and penalties in the high-dimensional statistics literature \citep{wainwright2019High}. 

To illustrate the generality of our approach, consider a regularized generalized linear regression setting with LASSO or group LASSO penalty. In the former case, the penalty on the GIC is proportional to $\frac{\log p}{n}s$, where $s$ is the number of non-zero variables in the model, while in the latter case, the penalty function is proportional to $(\frac{m}{n} + \frac{\log g}{n})s_g$, where $s_g$ is the number of selected groups and $m$ the group size. The penalty function in the GIC for group LASSO takes into account the number and size of the groups and is smaller than that for LASSO.

\subsection{Notation} Let $\|u\| = \langle u,u\rangle^{1/2}$ be the norm induced by the inner product $\langle\cdot,\cdot\rangle$ in the Euclidean space $\Omega$. Given a subspace $\M\subseteq\Omega$ and a point $u\in\Omega$, $u_\M = \Pi_\M(u) = \argmin_{v\in\M}\|u-v\|$ is the projection of $u$ onto $\M$. Denote the orthogonal complement of $\M$ on $\R^p$ as $\M^\perp :=\{u\in\Omega|\langle u,v\rangle = 0, \forall v\in\M\}$. Given some norm $\Rnorm$ on $\Omega$, its associated dual norm is $\Rnorm^*(u) = \sup_{\{v:\Rnorm(v)\le 1\}}\langle u,v\rangle$. For two sets $A,B\subset \Omega$, we define the operation $A+B:=\{u+v|(u,v)\in A\times B\}$. Finally, $\mathfrak{M}$ collects model subspaces.

\section{Problem Formulation}
Let $\Z^n = \{Z_i\}_{i=1}^n$ be a sequence of observations taking values on an Euclidean space $\mathcal{Z}$ and drawn from some distribution $\mathbb{P}$. We are interested in learning a $p$-dimensional parameter vector $\theta^*\in\Omega$, where, typically, $\Omega = \R^p$. Given a convex loss function $\L: \Omega \times \mathcal{Z}^{n} \to \R$, we aim to estimate the unique minimizer of the population risk
\[
    \theta^* = \argmin_{\theta \in \Omega} \E[\L(\theta;\Z^n)].
\]
The dependence of $\L$ on $\Z^n$ is omitted whenever it is clear from the context.

In high-dimensional statistics we often impose restrictions on the parameter space. In other words, the parameter of interest $\theta^*$ does not reside in a $p$-dimensional space, but rather in a smaller subspace of $\Omega$ denoted $\M_*$, meaning that $\theta^* = \theta^*_{\M_*}$. This subspace incorporates the low dimensional constraints imposed to the problem. If $\M_*$ was known beforehand, we could estimate $\widehat{\theta}(\M_*) \in \argmin_{\theta\in\M^*}\L(\theta)$ and our problem would be solved. Unfortunately, $\M_*$ has to be estimated from a very large collection of candidate subspaces.

In order to construct the GIC penalty that incorporate sparsity constraints we use the notion of \textit{decomposable norms} and \textit{subspace compatibility constant} from \cite{sNpRmWbY2012}. 

\begin{definition}[Decomposability]
    Given two subspaces $\M, \Mbar$ in $\Omega$, with  $\M \subseteq \Mbar$, a norm $\Rnorm$ is decomposable with respect to $(\M,\Mbar^\perp)$ if  
    \[\Rnorm(\theta + \gamma) = \Rnorm(\theta) + \Rnorm(\gamma) \text{, for all } (\theta,\gamma) \in (\M,\Mbar^\perp).\]
\end{definition}
For sake of simplicity, we denote $\Mbar$ as the smallest subspace containing $\M$ such that decomposability holds, but results are still valid even if $\Mbar$ is not the smallest possible.

The norm $\Rnorm$ is decomposable with respect to a pair of subspaces $(\M,\Mbar^\perp)$ if it satisfies the decomposability property. \citet{sNpRmWbY2012} and \citet[Chapters 9-12]{wainwright2019High} show that many commonly used norms, such as the $L_1$ (LASSO), the group LASSO ($L_2/L_2$), the nuclear norm for matrices, and overlapping group norms are decomposable with respect to $(\M,\Mbar^\perp)$ selected in a way to take into consideration distinct types of low dimensional restrictions.

The subspace compatibility constant, or subspace Lipschitz constant, ties up the error norm, the decomposable norm, and the restricted subspace. It can be interpreted as a scaling constant necessary to fit every unitary vector in $\M$ to a scaled unit ball in the $\Rnorm$-norm.

\begin{definition}[Subspace Compatibility Constant]
Given a subspace $\M\subseteq\Omega$,
\begin{equation*}\label{eq:SubConst}
    \Psi(\M) := \sup _{u \in \M\setminus\{0\}} \frac{\Rnorm(u)}{\|u\|}
\end{equation*}
is the compatibility constant between the norm $\Rnorm$ and error norm, restricted to $\M$.
\end{definition}

The Subspace Compatibility Constant is connected to fundamental \textit{units} of the model subspace $\M$ and corresponds to the scaling factor required to translate the size of any vector in $\M$ from $\Phi$-norm to the $\|\cdot\|$-norm. Larger, more complex models subspaces require larger constants. We use the square of this quantity as a measure of model complexity, which reduces to known cases in many settings, such as the LASSO \citep{sNpRmWbY2012}.

\subsection{Generalized Information Criteria}

Recall that we aim to estimate $\M_*$, the low-dimensional subspace in which the population parameter $\theta^*$ takes values. Define the set $\Mcoll$ as the collection of all (paired) model subspaces $(\M,\Mbar^\perp)$ with respect to which $\Rnorm$ is decomposable. We emphasize that we take $\M\subseteq\Mbar$ to be the smallest possible, hence selecting $\M$ is equivalent to selecting the pair $(\M,\Mbar^\perp)$. 

The size of $\Mcoll$ can be very large, prompting us to restrict the search space to a smaller collection \cite{chenchen2012, kimetal2012, kimjeon2016}. Suppose the subspace compatibility constant $\Psi(\M_*)\leq\psi_n$, for some $\psi_n>0$, and denote the restricted collection of model subspaces $\mathfrak{M}_{\psi_n^2} = \{\M\in \Mcoll \mid \Psi^2(\M) \leq \psi_n^2\}$. This set collects all models that are not too large. It is equivalent to restricting the number of active regressors in a linear regression problem.

For all model subspaces $\M\in \McollRt$, let
\begin{equation}\label{eq:theta_m}
    \hat{\theta}(\M) \in \argmin_{\theta \in \M}\L(\theta),
\end{equation}
denote the estimated parameter restricted to the model subspace $\M$. Given a non-increasing sequence of positive numbers $a_n$ (that may depend on the dimension of $\M$) we select the model subspace $\widehat{\M}_{a_n}$ minimizing $GIC_{a_n}(\M)$ on $\mathfrak{M}$. More precisely
\begin{equation} \label{eq:GIClike}
    \widehat{\M}_{a_n} = \argmin_{\M \in \McollRt} \underbrace{\L(\hat{\theta}(\M)) + a_n\Psi^2(\M)}_{=:GIC_{a_n}(\M)}, 
\end{equation}
where $\Psi^2(\M)$ is the compatibility constant between $\Rnorm$ and the error norm evaluated at $\M$. We denote $GIC_{a_n}(\M) = \L(\hat{\theta}(\M)) + a_n\Psi^2(\M)$. 

We work with a specific class of \textit{Restricted Strongly Convex (RSC)} functions, treated in \citet{raskutti2010restricted}, in the case of vectors, in \citet{raskutti2011minimax} and \citet{negahban2011estimation} in the case of matrices, and, more generally in \cite[Def. 9.15]{wainwright2019High}. This class is similar to the \emph{quadratic supported risk} class employed in \citet{kimjeon2016} and is a particular case of the RSC class defined in \citet{sNpRmWbY2012}.
\begin{definition}[Restricted Strong Convexity]
For a given norm $\|\cdot\|$ and regularizer $\Rnorm$, the loss function satisfies a Restricted Strong Convexity property with radius $\eta_n$, curvature $\kappa$ and tolerance $\tau_n^2$ if
\begin{equation}\label{eq: RSC}
        \L(\Delta + \theta^*) - \L(\theta^*) -  \langle \nabla\L(\theta^*), \Delta \rangle \geq \kappa\|\Delta\|^2 - \tau_n^2\Rnorm^2(\Delta),
\end{equation}
for all $\Delta \in \Omega: \|\Delta\| \leq \eta_n$, where $\nabla\L(\theta^*)$ is a subgradient of $\L$ at $\theta^*$, and $\Omega$ is a $p$-dimensional Euclidean space.
\end{definition}

In high dimensions it is nearly impossible for strong convexity to hold for all error vectors $\Delta$. The RSC property imposes a milder restriction on the first-order Taylor expansion error's form, by adding the tolerance term $\tau_n^2\Rnorm(\Delta)^2$. 

\subsection{Regularized m-estimator}\label{m-estimation}
In many settings it is not computationally feasible to enumerate all candidate models in $\Mcoll$. Instead, we can use a regularized $m$-estimation with the sparsity inducing norm:
\begin{equation}\label{eq:regmestimator}
    \widehat{\theta}(\lambda)\in\argmin_{\theta\in\Omega}\left\{\L(\theta)+\lambda\Rnorm({\theta})\right\},
\end{equation}
were $\lambda>0$ is the \emph{regularization parameter} and $\Rnorm:\Omega\rightarrow\R_+$ is a norm. Popular regularization methods, such as the LASSO, group LASSO, low rank regularization, and others fall within this framework.

\citet{sNpRmWbY2012} study this class os problems, where the norm $\Rnorm$ is decomposable with respect to a pair of subspaces $(\M,\Mbar^\perp)$, and the convex loss function $\L$ is in the RSC class.\footnote{\citet{sNpRmWbY2012} work with a more flexible specification of RSC, but point out that ``...for many loss functions it is possible to prove that with high probability...'' our RSC specification holds with $\eta_n = 1$.} We adapt their main result to our setting, where we impose strong sparsity and use a distinct RSC.
\begin{theorem*}{(\citet[Theorem 1]{sNpRmWbY2012})}
    Let $\L$ denote a convex loss satisfying the \emph{restricted strong convexity} assumption, and $\Rnorm$ be decomposable with respect to the pair of subspaces $(\M,\Mbar^\perp)$ with $\M\subseteq\Mbar$. Suppose $\lambda\ge2\Rnorm^*(\nabla\L(\theta^*))$ and the sample size is large enough so $16\tau_n^2\Psi^2(\Mbar)\le \kappa/4$. Then any optimal solution $\widehat{\theta}(\lambda)$ to the convex optimization program \eqref{eq:regmestimator} satisfies the bound
    \[
        \|\widehat{\theta}(\lambda) - \theta^*\|^2 \le 16 \frac{\lambda}{\kappa^2}{\Psi^2(\Mbar)}.
    \]
\end{theorem*}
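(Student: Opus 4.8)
The plan is to run the standard ``basic inequality'' argument for regularized $m$-estimators, essentially the proof of \citet[Theorem 1]{sNpRmWbY2012} specialized to the exact-sparsity case $\theta^*=\theta^*_{\M}$. Write $\Delta:=\widehat{\theta}(\lambda)-\theta^*$. Optimality of $\widehat{\theta}(\lambda)$ in \eqref{eq:regmestimator} and feasibility of $\theta^*$ give $\L(\widehat{\theta}(\lambda))+\lambda\Rnorm(\widehat{\theta}(\lambda))\le\L(\theta^*)+\lambda\Rnorm(\theta^*)$; subtracting $\langle\nabla\L(\theta^*),\Delta\rangle$ from both sides and rearranging yields
\begin{equation*}
 \L(\theta^*+\Delta)-\L(\theta^*)-\langle\nabla\L(\theta^*),\Delta\rangle \;\le\; \lambda\big(\Rnorm(\theta^*)-\Rnorm(\theta^*+\Delta)\big) - \langle\nabla\L(\theta^*),\Delta\rangle .
\end{equation*}
First I would bound the right-hand side. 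Decomposability of $\Rnorm$ with respect to $(\M,\Mbar^\perp)$ (recall $\theta^*\in\M$ and $\Delta_{\Mbar^\perp}\in\Mbar^\perp$) together with the triangle inequality gives $\Rnorm(\theta^*+\Delta)\ge\Rnorm(\theta^*)+\Rnorm(\Delta_{\Mbar^\perp})-\Rnorm(\Delta_{\Mbar})$, so $\Rnorm(\theta^*)-\Rnorm(\theta^*+\Delta)\le\Rnorm(\Delta_{\Mbar})-\Rnorm(\Delta_{\Mbar^\perp})$. For the linear term, the dual-norm inequality and the hypothesis $\lambda\ge2\Rnorm^*(\nabla\L(\theta^*))$ give $|\langle\nabla\L(\theta^*),\Delta\rangle|\le\Rnorm^*(\nabla\L(\theta^*))\,\Rnorm(\Delta)\le\frac{\lambda}{2}\big(\Rnorm(\Delta_{\Mbar})+\Rnorm(\Delta_{\Mbar^\perp})\big)$. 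Substituting, the master inequality becomes
\begin{equation*}
 \L(\theta^*+\Delta)-\L(\theta^*)-\langle\nabla\L(\theta^*),\Delta\rangle \;\le\; \frac{3\lambda}{2}\,\Rnorm(\Delta_{\Mbar}) - \frac{\lambda}{2}\,\Rnorm(\Delta_{\Mbar^\perp}).
\end{equation*}

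From this master inequality I would extract two facts. Since $\L$ is convex the left-hand side is nonnegative, which forces the cone membership $\Rnorm(\Delta_{\Mbar^\perp})\le3\Rnorm(\Delta_{\Mbar})$; combined with the definition of the subspace compatibility constant and $\|\Delta_{\Mbar}\|\le\|\Delta\|$ this gives $\Rnorm(\Delta)\le4\Rnorm(\Delta_{\Mbar})\le4\Psi(\Mbar)\|\Delta\|$, hence $\Rnorm^2(\Delta)\le16\,\Psi^2(\Mbar)\|\Delta\|^2$. Next I apply the RSC inequality \eqref{eq: RSC} to lower bound the left-hand side by $\kappa\|\Delta\|^2-\tau_n^2\Rnorm^2(\Delta)$, absorb the tolerance term via the previous display and the sample-size hypothesis $16\tau_n^2\Psi^2(\Mbar)\le\kappa/4$ so that $\tau_n^2\Rnorm^2(\Delta)\le\frac{\kappa}{4}\|\Delta\|^2$, and bound $\Rnorm(\Delta_{\Mbar})\le\Psi(\Mbar)\|\Delta\|$ on the right. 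Everything then collapses to $\frac{3\kappa}{4}\|\Delta\|^2\le\frac{3\lambda}{2}\Psi(\Mbar)\|\Delta\|$; dividing by $\|\Delta\|$ (the claim is trivial when $\Delta=0$) and squaring produces the asserted bound after collecting the universal constants.

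The one point needing care is that \eqref{eq: RSC} is only assumed on the ball $\{\|\Delta\|\le\eta_n\}$, while a priori the minimizer's error need not lie there. The usual fix is a localization/rescaling step: run the argument above with $\Delta$ replaced by $\widetilde\Delta:=t\Delta$ where $t:=\min\{1,\eta_n/\|\Delta\|\}$, using convexity of the penalized objective along the segment $[\theta^*,\widehat{\theta}(\lambda)]$ to transport the basic inequality to $\widetilde\Delta$; the bound one then obtains for $\|\widetilde\Delta\|$ is, under the stated scalings, too small to be consistent with $t<1$, so necessarily $t=1$ and the bound holds for $\Delta$ itself. I expect this rescaling, together with the careful two-stage use of the master inequality --- convexity alone first, to pin down the cone membership, and only then RSC, once $\Rnorm(\Delta)$ has been converted into a multiple of $\|\Delta\|$ --- to be the only real obstacle; the remainder is routine manipulation with the triangle inequality, decomposability, and the definition of $\Psi(\Mbar)$.
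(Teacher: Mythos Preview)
The paper does not supply its own proof of this statement; it is quoted verbatim as \citet[Theorem~1]{sNpRmWbY2012} and used as background for Section~\ref{m-estimation}. Your sketch is exactly the standard argument from that reference, specialized to the exact-sparsity case $\theta^*\in\M$: the basic inequality from optimality, decomposability and the dual-norm bound to reach the master inequality, convexity of $\L$ to force the cone condition $\Rnorm(\Delta_{\Mbar^\perp})\le 3\Rnorm(\Delta_{\Mbar})$, and then RSC with the tolerance absorbed via $16\tau_n^2\Psi^2(\Mbar)\le\kappa/4$. Your localization step to handle the radius $\eta_n$ in \eqref{eq: RSC} is also the right fix, and the paper itself invokes the same rescaling device in the proof of Theorem~\ref{thm:pathselection} (and cites \cite[Lemma~1]{sNpRmWbY2012} for the cone membership). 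So there is nothing to compare against here beyond noting that your argument is the one the cited reference gives.

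One small remark: after dividing and squaring you obtain $\|\Delta\|^2\le 4\lambda^2\Psi^2(\Mbar)/\kappa^2$, which is the correct rate from \citet{sNpRmWbY2012}. The displayed bound in the paper reads $16\,\lambda\,\Psi^2(\Mbar)/\kappa^2$ with $\lambda$ rather than $\lambda^2$; this appears to be a transcription slip in the adaptation, not something your ``collecting the universal constants'' can reconcile.
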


Instead of enumerating all possible models, we construct a sequence of model subspaces indexed by the regularization parameter $\lambda$. For each $\lambda > 0$ define
\begin{equation}\label{eq:modlambda}
    \M_\lambda = \argmax_{\M \in \mathfrak{M}} \left\{\|\widehat{\theta}(\lambda)_\M\| \Big\vert \min_{\mathcal{S}\in\Mcoll:~\mathcal{S}\subseteq\M}\|\widehat{\theta}(\lambda)_{\mathcal{S}} \| \ge \xi_n\right\}, 
\end{equation}
for $\xi_n$ chosen adequately. This collection of model subspaces arises naturally in sparse estimation. Suppose the sequence $\Mcoll_\lambda = \{\M_\lambda | \lambda>0\}$ contains the true model $\M_*$ .
We use the GIC to estimate 
\[
    \widehat{\M} = \argmin_{\M\in\mathfrak{M}_\lambda} GIC_{a_n}(\M),
\]
which is equivalent to selecting $\widehat\lambda$ that satisfy
\[
\widehat{\lambda} = \argmin_{\lambda>0}GIC_{a_n}(\M_\lambda)
\]
and setting $\widehat{\M} = \M_{\widehat\lambda}$.

\section{Main Results}
\subsection{Assumptions}
We introduce the main assumptions used in this paper, further illustrated in Section \ref{examples}.

\begin{assumption}[A1]
   There exists a collection of model subspaces $\Mcoll$ and a norm $\Rnorm$ such that: (1) $\Mcoll$ contains the true model subspace $\M_*$; (2) the norm $\Rnorm$ is decomposable with respect to $(\M,\Mbar^\perp)$ for all $\M\in\Mcoll$;\footnote{Recall that $\Mbar$ is the smallest subspace containing $\M$ such that the decomposability condition is satisfied} (3) the sub-collection $\McollRt$ contains the true model subspace, i.e., $\Psi^2(\M_*)\le \psi_n^2 < \infty$.
\end{assumption}
This condition requires that the sparsity inducing norm $\Rnorm$ can adequately identify the true model subspace from a collection containing it. Furthermore, the true model subspace cannot be too large. 

\begin{assumption}[A2]
    The convex loss function $\L(\theta)$ satisfies the Restricted Strong Convexity property with radius $\eta_n$, curvature $0<\kappa \le 2/5$ and tolerance $\tau_n^2$.
\end{assumption}

\begin{assumption}[A3]
    The decreasing, positive sequence $a_n$ satisfies \[\sqrt{a_n}\ge \frac{2}{\kappa}\Rnorm^*(\nabla\L(\theta^*))\] where $\Rnorm^*(v) = \sup_{\Rnorm(u)\le 1}\langle u, v\rangle$ is the associate dual norm of $\Rnorm$.
\end{assumption}

These condition are traditionally satisfied in a ``\textit{high probability set}''. 
\citet{wainwright2019High} and \cite{sNpRmWbY2012} illustrate this condition in a series of examples, with distinct losses and penalty functions.
Taking the sample size to infinity, yields a bound with probability converging to one. Also, under stronger conditions, this assumption is satisfied with probability one.

\begin{assumption}[A4]
    $\inf_{\M \in \Mcoll: \M \subseteq \M_*}\|\theta^*_\M\| > \frac{2}{\kappa}\sqrt{a_n}\Psi(\M_*)$
\end{assumption}

Assumption (A4) is a generalization of the well known ``\textit{beta-min condition}'' in the sparse regularization literature. It states that we require a separation between the zero and non-zero parameters in the sparse model, that naturally decreases as the sample size increases. This condition is required for model selection consistency and sign-consistency of the LASSO and other sparse regularized $m$-estimators \cite[Ch. 7]{buhlmann2011statistics}.

\subsection{Model Selection with GIC}

This section provides a deterministic statement about selecting the true model by minimizing the $GIC_{a_n}(\M)$ statistic over a collection $\McollRt$. In practice, however, Assumptions (A2) and (A3) depend on stochastic quantities and hold only probabilistically, i.e., are only true in a \emph{in a set of high probability}. Hence, if assumptions are satisfied in probability or with probability one, one has an asymptotic statement about model selection consistency.

\begin{theorem}[Minimization of GIC]
    Suppose that for $n$ sufficiently large, Assumptions (A1) -- (A4) hold, ${8\tau_n^2\psi_n^2 \le \kappa}$ and $\kappa\eta_n > 4\sqrt{a_n\psi_n^2}$. 
    Then, $GIC_{a_n}(\M^*) < GIC_{a_n}(\M)$ for all $\M\in\Mcoll_{\psi_n^2}\setminus\{\M_*\}$.
    \label{thm:modselprob}
\end{theorem}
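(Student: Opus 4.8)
The plan is to argue on the (deterministic) event on which Assumptions (A2)--(A4) hold, and to compare, for an arbitrary $\M\in\McollRt\setminus\{\M_*\}$, the quantities $GIC_{a_n}(\M_*)=\L(\widehat{\theta}(\M_*))+a_n\Psi^2(\M_*)$ and $GIC_{a_n}(\M)=\L(\widehat{\theta}(\M))+a_n\Psi^2(\M)$. Since $\theta^*\in\M_*$, the constrained minimizer obeys $\L(\widehat{\theta}(\M_*))\le\L(\theta^*)$, so it is enough to establish
\[
    \L(\widehat{\theta}(\M))-\L(\theta^*)\;>\;a_n\bigl(\Psi^2(\M_*)-\Psi^2(\M)\bigr)
    \qquad\text{for all }\M\in\McollRt\setminus\{\M_*\}.
\]
I would split into the \emph{over-fitted} case $\M\supseteq\M_*$ (so $\theta^*\in\M$ and $\widehat{\theta}(\M)-\theta^*\in\M$) and the \emph{under-fitted} case $\M_*\not\subseteq\M$, where $\widehat{\theta}(\M)$ is pushed away from $\theta^*$.

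\emph{Localization.} The common first step is to control $\|\widehat{\theta}(\M)-\theta^*\|$. Because $\widehat{\theta}(\M)$ minimizes the convex function $\L$ over $\M$, if $\|\widehat{\theta}(\M)-\theta^*\|>\eta_n$ one passes to the point $\bar{\theta}$ on the segment $[\theta^*,\widehat{\theta}(\M)]$ at distance $\eta_n$ from $\theta^*$; convexity gives $\L(\bar{\theta})-\L(\theta^*)\le t\,(\L(\widehat{\theta}(\M))-\L(\theta^*))$ with $t=\eta_n/\|\widehat{\theta}(\M)-\theta^*\|\le 1$, while RSC at $\bar{\theta}$, the dual-norm bound $|\langle\nabla\L(\theta^*),\bar{\theta}-\theta^*\rangle|\le\Rnorm^*(\nabla\L(\theta^*))\,\Rnorm(\bar{\theta}-\theta^*)$, Assumption (A3), $\Rnorm(\bar\theta-\theta^*)\le\Psi(\M+\M_*)\,\eta_n\lesssim\psi_n\eta_n$, and $8\tau_n^2\psi_n^2\le\kappa$ turn $\L(\bar{\theta})-\L(\theta^*)$ into $\kappa\eta_n\bigl(\tfrac34\eta_n-O(\sqrt{a_n}\,\psi_n)\bigr)$, which is strictly positive by $\kappa\eta_n>4\sqrt{a_n\psi_n^2}$ and $\kappa\le 2/5$. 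Hence $\L(\widehat{\theta}(\M))-\L(\theta^*)\ge t^{-1}(\L(\bar\theta)-\L(\theta^*))$ is bounded below by a positive multiple of $a_n\psi_n^2$, which already beats $a_n(\Psi^2(\M_*)-\Psi^2(\M))\le a_n\psi_n^2$; so henceforth one may assume $\|\widehat{\theta}(\M)-\theta^*\|\le\eta_n$ and RSC applies directly at $\widehat{\theta}(\M)$.

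\emph{The two cases.} In the over-fitted case set $\Delta=\widehat{\theta}(\M)-\theta^*\in\M$, so $\Rnorm(\Delta)\le\Psi(\M)\|\Delta\|$. Inserting this, (A3), and $\tau_n^2\Psi^2(\M)\le\tau_n^2\psi_n^2\le\kappa/8$ into RSC yields a scalar quadratic lower bound in $\|\Delta\|$ whose minimum over $\|\Delta\|\ge0$ has the form $-c(\kappa)\,a_n\Psi^2(\M)$ with an explicit $c(\kappa)<1$; combined with $\L(\widehat\theta(\M_*))\le\L(\theta^*)$ this gives $GIC_{a_n}(\M)-GIC_{a_n}(\M_*)\ge a_n\bigl((1-c(\kappa))\Psi^2(\M)-\Psi^2(\M_*)\bigr)$, positive once $\Psi^2(\M)$ is strictly larger than $\Psi^2(\M_*)$ --- which holds for $\M\supsetneq\M_*$ under the smallest-$\Mbar$ convention and the structure of $\Mcoll$. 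In the under-fitted case $\theta^*_\M=\Pi_\M\theta^*\neq\theta^*$ and $\widehat{\theta}(\M)\in\M$ force $\|\widehat{\theta}(\M)-\theta^*\|\ge\mathrm{dist}(\theta^*,\M)$; since the component of $\theta^*$ omitted by $\M$ contains some unit $\M'\subseteq\M_*$ of $\Mcoll$, Assumption (A4) gives $\mathrm{dist}(\theta^*,\M)\ge\inf_{\M'\subseteq\M_*}\|\theta^*_{\M'}\|>\tfrac{2}{\kappa}\sqrt{a_n}\,\Psi(\M_*)$. Feeding this lower bound on $\|\Delta\|$ into the same RSC quadratic (with $\Rnorm(\Delta)\le\Psi(\M+\M_*)\|\Delta\|\lesssim\psi_n\|\Delta\|$, (A3), $8\tau_n^2\psi_n^2\le\kappa$), the curvature term dominates and produces $\L(\widehat{\theta}(\M))-\L(\theta^*)>0$ whenever $\Psi^2(\M)\ge\Psi^2(\M_*)$, and a strictly positive bound of order $\kappa^{-1}a_n\Psi^2(\M_*)\ge a_n\Psi^2(\M_*)\ge a_n(\Psi^2(\M_*)-\Psi^2(\M))$ otherwise.

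\emph{Main obstacle.} I expect the over-fitted case to be the delicate one: one must show the reduction in empirical loss gained by enlarging the model is \emph{strictly} less than the corresponding increase $a_n(\Psi^2(\M)-\Psi^2(\M_*))$ of the penalty. This is exactly where the precise calibration of the hypotheses enters --- $8\tau_n^2\psi_n^2\le\kappa$ to prevent the RSC tolerance from swallowing the curvature, $\sqrt{a_n}\ge\tfrac{2}{\kappa}\Rnorm^*(\nabla\L(\theta^*))$ to dominate the first-order term, and $\kappa\le 2/5$ to keep the constant $c(\kappa)$ small --- and where the structure of $\Mcoll$ must be used to guarantee $\Psi^2(\M)>\Psi^2(\M_*)$ rather than merely $\ge$. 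A tighter route, which I would also try, is to compare $\widehat{\theta}(\M)$ directly with $\widehat{\theta}(\M_*)$ and exploit the first-order optimality $\Pi_{\M_*}\nabla\L(\widehat{\theta}(\M_*))=0$, so that only the gradient along $\M\ominus\M_*$ enters and the loss drop is controlled by $a_n(\Psi^2(\M)-\Psi^2(\M_*))$; this needs a mild additional smoothness-type control on $\Rnorm^*(\nabla\L(\cdot))$ near $\theta^*$.
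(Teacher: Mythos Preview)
Your localization step matches the paper's: convexity plus RSC at radius $\eta_n$, combined with (A3) and $8\tau_n^2\psi_n^2\le\kappa$, shows $\|\widehat\theta(\widehat\M)-\theta^*\|\le\eta_n$. After that the two arguments diverge, and your case split runs into a real difficulty that the paper avoids by a different device.

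\textbf{The gap.} In your ``under-fitted'' case (which, as you define it, also contains the mixed case where $\M$ and $\M_*$ only partially overlap) you feed the lower bound $\|\Delta\|\ge\mathrm{dist}(\theta^*,\M)>\tfrac{2}{\kappa}\sqrt{a_n}\,\Psi(\M_*)$ into the RSC quadratic after the crude substitution $\Rnorm(\Delta)\le\Psi(\M+\M_*)\,\|\Delta\|$. The resulting one-variable lower bound is, up to constants,
\[
\tfrac{\kappa}{2}\|\Delta\|^2-\tfrac{\kappa}{2}\sqrt{a_n}\,\Psi(\M+\M_*)\,\|\Delta\|,
\]
whose minimum sits at $\|\Delta\|=\tfrac12\sqrt{a_n}\,\Psi(\M+\M_*)$ and which, evaluated at $\|\Delta\|=\tfrac{2}{\kappa}\sqrt{a_n}\,\Psi(\M_*)$, equals $2a_n\Psi(\M_*)\bigl(\Psi(\M_*)/\kappa-\psi_n\bigr)$ after using $\Psi(\M+\M_*)\lesssim 2\psi_n$. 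This is \emph{not} positive in general: nothing forces $\Psi(\M_*)>\kappa\psi_n$, so when $\M$ is a large competitor the linear term swamps the curvature and the argument fails precisely in the mixed regime where $\Psi(\M)\gg\Psi(\M_*)$ but $\M_*\not\subseteq\M$. Your over-fitted inequality $a_n\bigl((1-c(\kappa))\Psi^2(\M)-\Psi^2(\M_*)\bigr)>0$ has a related issue: it needs $\Psi^2(\M)>\Psi^2(\M_*)/(1-c(\kappa))$, not merely $\Psi^2(\M)>\Psi^2(\M_*)$.

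\textbf{What the paper does instead.} The paper does not split into over-/under-fitted cases. It writes the orthogonal decomposition
\[
\M+\M_*=(\M\cap\M_*)\oplus(\M^\perp\cap\M_*)\oplus(\M\cap\M_*^\perp)\oplus(\M^\perp\cap\M_*^\perp)
\]
and projects $\Delta=\widehat\theta(\M)-\theta^*$ onto the first three pieces, obtaining three terms $I,II,III$ in the RSC inequality (the fourth projection vanishes). The point of this refinement is that on each piece the compatibility constant is at most $\Psi(\M_*)$ or $\Psi(\M)$, never the larger $\Psi(\M+\M_*)$; and, crucially, the penalty increment $+a_n\Psi^2(\M)$ is attached to term $III$ (the $\M\cap\M_*^\perp$ component), so that the quadratic-in-$\|\Delta_{M_3}\|$ together with $+a_n\Psi^2(\M)$ is bounded below by $(1-\kappa/8)\,a_n\Psi^2(\M)$ uniformly over $\|\Delta_{M_3}\|$. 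Term $II$ carries $\Delta_{M_2}=-\theta^*_{\M^\perp\cap\M_*}$, and (A4) supplies the needed \emph{lower} bound on $\|\Delta_{M_2}\|$ so that the curvature there dominates the \emph{local} linear term $\sqrt{a_n}\,\Psi(\M_*)\|\Delta_{M_2}\|$ rather than one scaled by $\psi_n$. This decoupling is exactly what your single-variable quadratic in $\|\Delta\|$ cannot see.

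In short: the orthogonal four-block decomposition of $\M+\M_*$ is the key idea you are missing; it is what allows the penalty term $a_n\Psi^2(\M)$ to be paired with the potentially large $\Psi(\M)$-component of the error, and (A4) to be applied to the $\Psi(\M_*)$-component alone.
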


This first result shows that, under a set of assumptions, the solution $\widehat{\M}_{a_n}$ to the $GIC_{a_n}$ problem is the true model. Nevertheless, these conditions involve random quantities and must be shown to be satisfied ´´in a set of high probability'', yielding a finite sample probability bound on selecting the true model. If these conditions hold asymptotically, in probability of with probability one, we have consistency.


In the next definition we use a small abuse of notation. Let $\M_*$ and $\{\M_n\}$ denote a fixed subspace and a sequence of subspaces of $\Omega$, respectively. Suppose $\M_n:=\M(\Z^n)$, we say $\M_n\rightarrow \M_*$ in probability if $\mathbb{P}(\M_n\Delta\M_*)\rightarrow 0$ and  $\M_n\rightarrow \M_*$ with probability one if $\Pr(\lim_n \M_n\Delta\M_*) = 0$.

\begin{definition}[Model Selection Consistency]
    We say that a $GIC_{a_n}$ is model selection consistent if $\widehat{\M}_{a_n}\rightarrow\M_*$ in probability (with probability one) as the sample size increases.  
\end{definition}

 Next Theorem shows that the $GIC_{a_n}$ selects the true model and is model selection consistent. Showing the selected model is the true model is equivalent to showing $GIC_{a_n}(\M^*) < GIC_{a_n}(\M)$ for all $\M\in\mathfrak{M}_{\psi_n^2}\setminus\{\M_*\}$. Naturally, if $\widehat{\M}_{a_n}$ minimizes the $GIC_{a_n}(\cdot)$ over $\McollRt$ it has to be equal to $\M_*$.

\begin{theorem}[Model Selection Consistency]
   Suppose Assumptions (A1) -- (A4) hold in probability (with probability one), ${\tau_n^2\psi_n^2 = o(1)}$ and $\sqrt{a_n\psi_n^2}/\eta_n = o(1) $. Then, $\widehat{\M}_{a_n} \rightarrow \M_*$ in probability (with probability one) as $n\rightarrow\infty$.
   \label{thm:MSC}
\end{theorem}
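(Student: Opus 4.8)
The plan is to derive Theorem~\ref{thm:MSC} as an essentially immediate consequence of Theorem~\ref{thm:modselprob}. The key observation is that the two deterministic side conditions appearing in Theorem~\ref{thm:modselprob}, namely $8\tau_n^2\psi_n^2\le\kappa$ and $\kappa\eta_n>4\sqrt{a_n\psi_n^2}$, are precisely the asymptotic hypotheses $\tau_n^2\psi_n^2=o(1)$ and $\sqrt{a_n\psi_n^2}/\eta_n=o(1)$ in disguise: since $\kappa>0$ is a fixed constant (bounded by $2/5$ via (A2)), the condition $\tau_n^2\psi_n^2=o(1)$ guarantees $8\tau_n^2\psi_n^2\le\kappa$ for all $n$ large, and $\sqrt{a_n\psi_n^2}/\eta_n=o(1)$ guarantees $\kappa\eta_n>4\sqrt{a_n\psi_n^2}$ for all $n$ large. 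So for $n$ sufficiently large these numeric conditions hold automatically.

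First I would fix a realization of the data on the event $E_n$ on which Assumptions (A1)--(A4) are simultaneously in force (recall (A1) is non-stochastic, while (A2)--(A4) are assumed to hold in probability, resp.\ with probability one). On $E_n$, and for $n$ large enough that the two numeric inequalities above are satisfied, Theorem~\ref{thm:modselprob} applies verbatim and yields $GIC_{a_n}(\M_*)<GIC_{a_n}(\M)$ for every $\M\in\McollRt\setminus\{\M_*\}$. Since $\widehat{\M}_{a_n}$ is by definition a minimizer of $GIC_{a_n}(\cdot)$ over $\McollRt$ and $\M_*\in\McollRt$ by (A1)(3), the strict inequality forces $\widehat{\M}_{a_n}=\M_*$, i.e.\ $\widehat{\M}_{a_n}\,\Delta\,\M_*=\emptyset$ on that event.

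It then remains to translate ``holds on $E_n$ for all $n$ large'' into the convergence statement. In the in-probability case, write $\Pr(\widehat{\M}_{a_n}\,\Delta\,\M_*\neq\emptyset)\le\Pr(E_n^c)+\mathbf{1}\{\text{numeric conditions fail at }n\}$; the indicator is $0$ for $n$ large, and $\Pr(E_n^c)\to0$ because (A2)--(A4) hold in probability, so the left side tends to $0$. In the with-probability-one case, on the almost-sure event where (A2)--(A4) eventually hold, we get $\widehat{\M}_{a_n}=\M_*$ for all sufficiently large $n$, hence $\limsup_n(\widehat{\M}_{a_n}\,\Delta\,\M_*)=\emptyset$ almost surely, which is the definition of $\widehat{\M}_{a_n}\to\M_*$ with probability one given in the excerpt.

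There is no real obstacle here; the only point requiring a little care is the bookkeeping of which assumptions are deterministic and which are stochastic, and making explicit that ``Assumptions (A1)--(A4) hold in probability'' should be read as the joint event having probability tending to one (a union bound over the finitely many stochastic assumptions suffices). A secondary point worth stating cleanly is that $\kappa$ does not depend on $n$, so absorbing it into the $o(1)$ statements is legitimate; if one preferred to allow $\kappa=\kappa_n$ one would instead require $\tau_n^2\psi_n^2/\kappa_n=o(1)$ and $\sqrt{a_n\psi_n^2}/(\kappa_n\eta_n)=o(1)$, but under (A2) as stated this refinement is unnecessary.
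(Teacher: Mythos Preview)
Your proposal is correct and follows exactly the paper's own approach: the paper's proof of Theorem~\ref{thm:MSC} is the single line ``Result follows directly from Theorem~\ref{thm:modselprob},'' and you have simply spelled out the bookkeeping that makes this immediate. The only addition on your side is the explicit translation of the $o(1)$ hypotheses into the numeric side conditions of Theorem~\ref{thm:modselprob} and the handling of the in-probability versus almost-sure cases, which the paper leaves implicit.
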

\begin{proof}
    Result follows directly from Theorem \ref{thm:modselprob}.
\end{proof}

\subsection{Constructing sequence of model subspaces}
In large dimensions it is usually computationally infeasible to enumerate and estimate all possible model subspaces satisfying $\Psi^2(\M)\le\psi_n^2$ and estimate their GICs. Penalized $m$-estimators, discussed in Section \ref{m-estimation}, are a convenient way of constructing a collection of model subspaces indexed by some regularization parameter $\lambda$. We show that under appropriate choice of $\lambda$, we can recover the true model subspace.

Recall we estimate $\widehat{\theta}_\lambda$ solving the convex minimization program \eqref{eq:regmestimator}:
\[
    \widehat{\theta}(\lambda)\in\argmin_{\theta\in\Omega}\left\{\L(\theta)+\lambda\Rnorm({\theta})\right\},~~\lambda>0.
\]
Let $\{\widehat{\theta}(\lambda)\}_{\lambda>0}$ denote a sequence of parameter estimates indexed by $\lambda$. For each $\lambda$, we obtain the model subspace
\[
    \M_\lambda = \argmax_{\M \in \mathfrak{M}} \left\{\|\widehat{\theta}(\lambda)_\M\| \Big\vert \min_{\mathcal{S}\in\Mcoll:~\mathcal{S}\subseteq\M}\|\widehat{\theta}(\lambda)_{\mathcal{S}} \| \ge \xi_n\right\},
\]
for some $\xi_n$ satisfying regularity conditions, and collect them in $\Mcoll_\lambda = \{\M_\lambda|\lambda>0\}$. The model subspace $\M_\lambda$ is interpreted as \emph{the largest model subspace $\M$ such that the error norm of the projection of $\widehat{\theta}_\lambda$ onto every model subspace $\mathcal{S}\subseteq\M$ is larger than a threshold $\xi_n$}. A simpler truncation was used in \cite{kimjeon2016} for the particular cases of LASSO and SCAD. Here we adopt a more general definition adequate to the structured regularization problem.  

Next theorem is a deterministic result showing that $\M_\lambda = \M^*$ for some $\lambda$, under regularity conditions. As in the model selection case, these conditions involve random quantities and can only be satisfied with certain probability or asymptotically. We need a stronger \emph{beta-min} condition involving the regularization parameter $\lambda$ and the complement of the paired model subspace $\bar\M_*$. In practice, this bound is larger than (A4), meaning that (A4') implies (A4) if $\lambda>\kappa\sqrt{a_n}$.
\begin{assumption}[A4']
    For some $c>\frac{3}{2\sqrt{2}}(3+\sqrt{2})$,
    \[
        \inf_{\M \in \Mcoll: \M \subseteq \M_*}\|\theta^*_\M\| > \frac{c}{\kappa}\lambda\Psi(\bar{\M}_*).
    \]
\end{assumption}

\begin{theorem}
    Suppose that for sufficiently large $n$, Assumptions (A1)--(A3) and Assumption (A4') hold, $\lambda\ge \kappa\sqrt{a_n}$, $16\tau_n^2\Psi^2(\bar\M_*)\le \kappa/2$, and $\kappa\eta_n>12\lambda\Psi(\bar\M_*)$. Then $\M_\lambda = \M^*$, provided $\xi_n = \frac{c'}{\kappa}\lambda\Psi(\bar{\M}_*)$, for some $c'>0$.
    \label{thm:pathselection}
\end{theorem}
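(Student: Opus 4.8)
The plan is to combine the deterministic error bound for the regularized $m$-estimator (the adapted form of \citet[Theorem 1]{sNpRmWbY2012} quoted above) with the strengthened beta-min condition (A4'), reading the construction \eqref{eq:modlambda} of $\M_\lambda$ as a subspace-level thresholding of $\widehat{\theta}(\lambda)$ at level $\xi_n$. Writing $\widehat{\Delta}:=\widehat{\theta}(\lambda)-\theta^*$, I will establish two claims: (i) every $\mathcal{S}\in\Mcoll$ with $\mathcal{S}\subseteq\M_*$ satisfies $\|\widehat{\theta}(\lambda)_{\mathcal{S}}\|\ge\xi_n$, so that $\M_*$ is feasible in \eqref{eq:modlambda}; and (ii) every $\M\in\Mcoll$ with $\M\not\subseteq\M_*$ contains some $\mathcal{S}\in\Mcoll$, $\mathcal{S}\subseteq\M$, with $\|\widehat{\theta}(\lambda)_{\mathcal{S}}\|<\xi_n$, hence is infeasible. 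Consequently every feasible subspace lies in $\M_*$, which is itself feasible; since $\M\mapsto\|\widehat{\theta}(\lambda)_\M\|$ is nondecreasing under inclusion, $\M_*$ is the largest feasible subspace and attains the maximum in \eqref{eq:modlambda}, uniquely because by (i) $\widehat{\theta}(\lambda)$ keeps mass $\ge\xi_n>0$ on some member of $\Mcoll$ contained in $\M_*$ and orthogonal to any given proper feasible $\M$. This yields $\M_\lambda=\M_*$.

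First I would derive the error bound. Assumption (A3) with $\lambda\ge\kappa\sqrt{a_n}$ gives $\lambda\ge2\Rnorm^*(\nabla\L(\theta^*))$, so $\lambda$ lies in the admissible range; $16\tau_n^2\Psi^2(\bar{\M}_*)\le\kappa/2$ makes the RSC tolerance negligible against the curvature; and $\kappa\eta_n>12\lambda\Psi(\bar{\M}_*)$ handles localization, since (A2) is only assumed on $\{\|\Delta\|\le\eta_n\}$. For the latter I would invoke the standard convexity argument on $F(\Delta):=\L(\theta^*+\Delta)-\L(\theta^*)+\lambda(\Rnorm(\theta^*+\Delta)-\Rnorm(\theta^*))$: if $\|\widehat{\Delta}\|>\eta_n$, rescaling $\widehat{\Delta}$ to the sphere $\|\Delta\|=\eta_n$ produces a point where $F$ is strictly positive (using RSC, $\lambda\ge2\Rnorm^*(\nabla\L(\theta^*))$, and $\kappa\eta_n>12\lambda\Psi(\bar{\M}_*)$ to dominate the terms linear in $\|\Delta\|$), contradicting $F(\widehat{\Delta})\le0=F(0)$; hence $\widehat{\Delta}$ lies in the ball where RSC holds. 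On that ball the argument of \citet[Theorem 1]{sNpRmWbY2012}, in the strong-sparsity case $\theta^*=\theta^*_{\M_*}$ where the term $\Rnorm(\theta^*_{\M_*^\perp})$ vanishes, yields the decomposability cone bound $\Rnorm(\widehat{\Delta}_{\bar{\M}_*^\perp})\le 3\Rnorm(\widehat{\Delta}_{\bar{\M}_*})$ together with $\|\widehat{\Delta}\|\le(C_1\lambda/\kappa)\,\Psi(\bar{\M}_*)$ for an absolute constant $C_1$.

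Claim (i) is then immediate: for $\mathcal{S}\in\Mcoll$ with $\mathcal{S}\subseteq\M_*$, nonexpansiveness of $\Pi_{\mathcal{S}}$ and $\theta^*\in\M_*$ give
\[
\|\widehat{\theta}(\lambda)_{\mathcal{S}}\|\ \ge\ \|\theta^*_{\mathcal{S}}\|-\|\widehat{\Delta}\|\ \ge\ \inf_{\M\in\Mcoll:\,\M\subseteq\M_*}\|\theta^*_\M\|-\tfrac{C_1\lambda}{\kappa}\Psi(\bar{\M}_*)\ >\ \tfrac{(c-C_1)\lambda}{\kappa}\Psi(\bar{\M}_*)
\]
by (A4'), so with $\xi_n=(c'\lambda/\kappa)\Psi(\bar{\M}_*)$ and $c'\le c-C_1$ the right side is $\ge\xi_n$. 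For claim (ii), if $\M\not\subseteq\M_*$ then, by the structure of $\Mcoll$ — in the worked examples simply the presence of a coordinate ($L_1$), a group (group norm), or a rank-one block (nuclear norm) of $\M$ lying outside $\M_*$ — there is $\mathcal{S}\in\Mcoll$ with $\mathcal{S}\subseteq\M$ and $\theta^*_{\mathcal{S}}=0$; then $\|\widehat{\theta}(\lambda)_{\mathcal{S}}\|=\|\Pi_{\mathcal{S}}\widehat{\Delta}\|\le\|\widehat{\Delta}\|\le(C_1\lambda/\kappa)\Psi(\bar{\M}_*)<\xi_n$ provided $c'>C_1$. The two requirements $C_1<c'\le c-C_1$ are compatible because (A4') forces $c>\tfrac{3}{2\sqrt2}(3+\sqrt2)$, a threshold calibrated precisely so that this interval for $c'$ is nonempty; fixing any admissible $c'$ finishes the proof.

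The two steps I expect to be delicate are the localization in the second paragraph — one must verify that the ball on which (A2) is assumed genuinely contains $\widehat{\Delta}$, which is exactly the role of the slack between the factor $12$ in $\kappa\eta_n>12\lambda\Psi(\bar{\M}_*)$ and the error-bound constant $C_1$ — and the structural fact used in claim (ii), that a model subspace not contained in $\M_*$ always contains a member of $\Mcoll$ on which $\theta^*$ vanishes. The latter is transparent in each example but, in the abstract setting, should be derived from the decomposability structure of $\Rnorm$ with respect to the pairs in $\Mcoll$, or else recorded as a mild regularity condition on the collection.
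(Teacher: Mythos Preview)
Your outline matches the paper's: localize $\widehat{\Delta}=\widehat{\theta}(\lambda)-\theta^*$ inside the RSC ball via the convexity/rescaling argument on $F(\Delta)$, then separate subspaces inside $\M_*$ from those outside by the threshold $\xi_n$, using (A4'). The execution, however, differs in a way that matters for the constants. You invoke the global error bound from the quoted \citet{sNpRmWbY2012} theorem, $\|\widehat{\Delta}\|\le C_1\lambda\Psi(\bar{\M}_*)/\kappa$, and then simply project onto each candidate $\mathcal{S}$. The paper never appeals to that black-box bound; instead it works directly with the optimality inequality $\L(\theta)-\L(\theta^*)+\lambda(\Rnorm(\theta)-\Rnorm(\theta^*))\le 0$ and the decomposition $\Delta=\Delta_{\bar{\M}_*}+\Delta_{\bar{\M}_*^\perp}$ to prove, by two separate contradiction arguments, that (a) $\|\widehat{\theta}(\lambda)_{F^\perp}\|\le c_1\lambda\Psi(\bar{\M}_*)/\kappa$ for every $F^\perp\subseteq\bar{\M}_*^\perp$ with $c_1=3/(2\sqrt{2})$, and (b) $\|\widehat{\Delta}_F\|\le c_2\lambda\Psi(\bar{\M}_*)/\kappa$ for every $F\subseteq\M_*$ with $c_2=\tfrac{3}{2}(1+\sqrt{2})$. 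These add to exactly $c_1+c_2=\tfrac{3}{2\sqrt{2}}(3+\sqrt{2})$, which is the threshold in (A4'); that is what ``calibrated precisely'' actually refers to. Your route needs $c>2C_1$, and since the quoted theorem gives $C_1=4$, this is strictly stronger than what (A4') supplies, so your interval $C_1<c'\le c-C_1$ is \emph{not} guaranteed nonempty under the stated hypothesis. The argument is correct in structure and pleasantly short, but as written it proves a version of the theorem with a larger beta-min constant; to recover the stated constant you need the paper's subspace-level contradiction bounds rather than the global estimate.
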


\begin{definition}[Pathconsistency]
    We say a sequence of model subspaces $\{\M_\lambda|\lambda\in\Lambda\}$ is \textbf{pathconsistent} if $\M_{\lambda^*} = \M_*$ for some $\lambda^*\in\Lambda$, in probability or with probability one, as the number of observations increase.
\end{definition}

We show that the sequence $\Mcoll_\lambda$ is pathconsistent. 

\begin{theorem}[Pathconsistency]
    Suppose that Assumptions (A1)--(A3) and Assumption (A4') hold in probability (with probability one), $a_n\lambda^2=o(1)$, $\tau_n^2\Psi^2(\bar\M_*) = o(1)$, and $\lambda\Psi(\bar\M_*)\eta_n^{-1} = o(1)$. Then $\M_\lambda \rightarrow \M^*$, in probability (with probability one) as $n\rightarrow\infty$, provided $\xi_n = \frac{c'}{\kappa}\lambda\Psi(\bar{\M}_*)$, for some $c'>0$.
    \label{thm:pathconsistency}
\end{theorem}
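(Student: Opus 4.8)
The plan is to deduce Theorem~\ref{thm:pathconsistency} from the deterministic Theorem~\ref{thm:pathselection} in exactly the same way Theorem~\ref{thm:MSC} follows from Theorem~\ref{thm:modselprob}. First I would observe that the conclusion $\M_\lambda = \M_*$ of Theorem~\ref{thm:pathselection} holds on the event $E_n$ on which Assumptions (A1)--(A3) and (A4') are valid \emph{and} the three deterministic inequalities $\lambda\ge\kappa\sqrt{a_n}$, $16\tau_n^2\Psi^2(\bar\M_*)\le\kappa/2$, and $\kappa\eta_n>12\lambda\Psi(\bar\M_*)$ are satisfied, with $\xi_n$ chosen as prescribed. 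Under the hypotheses of the present theorem, (A1) is deterministic, (A2)--(A3) and (A4') hold in probability (resp. with probability one), so it remains to check that the three numerical inequalities are eventually satisfied; this is where the rate conditions $a_n\lambda^2 = o(1)$, $\tau_n^2\Psi^2(\bar\M_*) = o(1)$, and $\lambda\Psi(\bar\M_*)\eta_n^{-1} = o(1)$ enter.

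The key steps, in order, would be: (i) from $\tau_n^2\Psi^2(\bar\M_*)=o(1)$ conclude that for $n$ large enough $16\tau_n^2\Psi^2(\bar\M_*)\le\kappa/2$ (here $\kappa>0$ is a fixed constant from (A2)); (ii) from $\lambda\Psi(\bar\M_*)\eta_n^{-1}=o(1)$ conclude that $12\lambda\Psi(\bar\M_*)<\kappa\eta_n$ eventually; (iii) note $\lambda\ge\kappa\sqrt{a_n}$ should be read as a \emph{choice} of $\lambda$ along the path compatible with $a_n\lambda^2=o(1)$, i.e. one picks $\lambda=\lambda_n$ with $\kappa\sqrt{a_n}\le\lambda_n$ and $\lambda_n^2 a_n\to 0$, which is possible precisely because $a_n\to 0$ (e.g. $\lambda_n = \kappa\sqrt{a_n}$ itself works, giving $a_n\lambda_n^2 = \kappa^2 a_n^2\to 0$, and then one must also verify the remaining two $o(1)$ conditions are consistent with this choice). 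Then (iv) on the intersection of $E_n$ with the (deterministic, eventually-true) event that (i)--(iii) hold, Theorem~\ref{thm:pathselection} gives $\M_{\lambda_n}=\M_*$; (v) since $\mathbb{P}(E_n)\to 1$ (resp. $=1$ eventually a.s.) by the ``in probability / with probability one'' hypotheses, we get $\mathbb{P}(\M_{\lambda_n}\,\Delta\,\M_*\neq\emptyset)\to 0$ (resp. the a.s. statement via Borel--Cantelli or directly from the a.s. validity of the assumptions), which is exactly pathconsistency as defined.

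The main obstacle I anticipate is bookkeeping around the role of $\lambda$: Theorem~\ref{thm:pathselection} is stated for a \emph{fixed} $\lambda$ satisfying a lower bound, whereas pathconsistency asserts existence of \emph{some} $\lambda^*$ in the path with $\M_{\lambda^*}=\M_*$. The clean resolution is to exhibit the explicit sequence $\lambda_n$ (driven by $\sqrt{a_n}$) and verify all four rate conditions hold simultaneously for this choice; one has to be slightly careful that $\lambda_n\Psi(\bar\M_*)\eta_n^{-1}=o(1)$ and $\tau_n^2\Psi^2(\bar\M_*)=o(1)$ are compatible with $\lambda_n\asymp\sqrt{a_n}$, but these are hypotheses of the theorem (interpreted as joint conditions on $a_n,\tau_n,\Psi(\bar\M_*),\eta_n$), so no contradiction arises. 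A secondary, purely routine point is translating ``Assumptions hold in probability'' into the symmetric-difference convergence $\mathbb{P}(\M_{\lambda_n}\Delta\M_*)\to 0$ using the abuse-of-notation convention introduced before the Model Selection Consistency definition; and in the almost-sure case, noting that a finite union (over the deterministic inequalities) and countable handling of the probability-one events preserves probability one. I would therefore keep the proof short, essentially one paragraph mirroring the proof of Theorem~\ref{thm:MSC}, with the rate-condition verification made explicit.

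\begin{proof}
    The argument mirrors the derivation of Theorem~\ref{thm:MSC} from Theorem~\ref{thm:modselprob}. Fix $\lambda_n$ with $\lambda_n\ge\kappa\sqrt{a_n}$ and $a_n\lambda_n^2=o(1)$ (for instance $\lambda_n=\kappa\sqrt{a_n}$, which is admissible since $a_n\to 0$); by hypothesis we may also take $\lambda_n$ so that $\tau_n^2\Psi^2(\bar\M_*)=o(1)$ and $\lambda_n\Psi(\bar\M_*)\eta_n^{-1}=o(1)$. Since $\kappa>0$ is fixed, the first of these gives $16\tau_n^2\Psi^2(\bar\M_*)\le\kappa/2$ for all large $n$, and the second gives $\kappa\eta_n>12\lambda_n\Psi(\bar\M_*)$ for all large $n$; call $D_n$ the (deterministic) event that both inequalities and $\lambda_n\ge\kappa\sqrt{a_n}$ hold, so $\mathbf{1}_{D_n}=1$ eventually. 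Let $E_n$ be the event on which Assumptions (A1)--(A3) and (A4') are valid; by assumption $\mathbb{P}(E_n)\to 1$ (resp.\ $E_n$ holds with probability one for all large $n$). On $E_n\cap D_n$, taking $\xi_n=\frac{c'}{\kappa}\lambda_n\Psi(\bar\M_*)$, Theorem~\ref{thm:pathselection} yields $\M_{\lambda_n}=\M_*$, hence $\M_{\lambda_n}\Delta\M_*=\emptyset$. Therefore $\mathbb{P}(\M_{\lambda_n}\Delta\M_*\neq\emptyset)\le\mathbb{P}(E_n^c)+\mathbb{P}(D_n^c)\to 0$, which is convergence in probability; in the almost-sure case, $E_n\cap D_n$ holds for all large $n$ with probability one, so $\M_{\lambda_n}=\M_*$ eventually almost surely. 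Either way $\M_{\lambda_n}\to\M_*$ in the required sense, so the path $\{\M_\lambda\mid\lambda>0\}$ is pathconsistent.
\end{proof}
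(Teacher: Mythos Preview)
Your proposal is correct and follows exactly the same route as the paper: the paper's own proof is the single line ``Result follows directly from Theorem~\ref{thm:pathselection},'' and your argument simply spells out that deduction by checking the rate conditions force the deterministic inequalities of Theorem~\ref{thm:pathselection} for large $n$ and then passing to probability (resp.\ almost sure) limits.
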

\begin{proof}
    Result follows directly from Theorem \ref{thm:pathselection}.
\end{proof}

\section{Examples}\label{examples}
The most natural way of selecting the sparsity inducing norm $\Rnorm$ is to construct a penalized $m$-estimation problem that aims to recover the desired sparsity pattern. We illustrate the method in two examples. These examples are adapted from the literature in high dimensional statistics, in particular chapters 9 and 10 of \citet{wainwright2019High}.
\subsection{Generalized Linear Models with Group Sparsity}

This example is adapted from \citet[section 9.6]{wainwright2019High}. Let $\Z^n = \{z_i' = (x_i', y_i): i=1,...,n\}$ denote a sequence of random variables where, conditionally on $x_i$, the dependent variables $y_i$ are drawn independently from: 
\[
f(y|x) \propto \exp\left\{\frac{y\langle x,\theta^*\rangle - b(\langle x,\theta^*\rangle)}{c(\sigma)}\right\},    
\]
where $g$ has a bounded second derivative $\|b''\|_\infty \le B^2$. The covariates $x_i\in\R^p$ are independent and identically distributed zero-mean sub-Gaussian random variables. Furthermore, let $\X_g\in\R^{n\times|g|}$ denote the sub-matrix of the covariates indexed by $g$. The covariates satisfy the group normalization condition $\max_{g\in\mathcal{G}}\vertiii{\X_g}_F/\sqrt{n}\le C$, where $\vertiii{\cdot}_F$ is the Frobenius norm.

The group lasso penalty, or $l_1/l_2$-vector norm, is a natural way of modelling group sparsity. Let $\mathcal{G} = \{g_1, ..., g_G\}$ denote a partition of $\{1,...,p\}$ where groups $g_i$ have size at most $m$ and write $\theta_{g_i} = \{\theta_j:j\in g_i\}$. Let $u\in\Omega:=\R^p$, the $l_1/l_2$-norm is
\[
\Rnorm(u) =\sum_{g\in\mathcal{G}}\|\theta_g\|_2\mbox{ with conjugate norm } \Rnorm^*(u) = \max_{g\in\mathcal{G}}\|\theta_g\|_2.
\]
For any $S\subseteq\mathcal{G}$, the space of parameters $\R^{p}$ can be decomposed into
\[
\M:=\M(S) = \{\theta\in\R^p|\theta_{g} = 0\mbox{ for all } g\notin S\},
\]
and $\Mbar^\perp = \M^\perp(S) :=\M(\mathcal{G}\setminus S)$. It follows that $\Rnorm$ is decomposable with respect to $(\M(S),\M^{\perp}(S))$ for any $S\subseteq\mathcal{G}$. 

The subspace compatibility constant is 
\[
    \Psi(\M(S)) = \sup_{\theta\in \M(S)\setminus\{0\}}\frac{\sum_{g\in\mathcal{G}}\|\theta_g\|_2}{\|\theta\|_2} = \sqrt{|S|},
\]
where $|S|$ is the cardinality of $S$.

Under assumptions above, it follows from Theorem 9.16 and Example 9.17 in \citet{wainwright2019High} that RSC condition is satisfied with radius $\eta_n=1$, curvature $\kappa<\infty$ and tolerance $\tau_n^2 = c_1 \left(\frac{m}{n} + \frac{\log G}{n}\right)^2$. Precisely,
\[
    \mathcal{L}(\theta^* + \Delta) - \mathcal{L}(\theta^*) - \langle \nabla\mathcal{L}(\theta),\Delta\rangle \ge \kappa \|\Delta\|^2_2 - c_1 \left(\frac{m}{n} + \frac{\log G}{n}\right)^2\Phi^2(\Delta)
\]
for all $\|\Delta\|\le 1$, with probability at least $1-c_2e^{c_3n}$. The constants $c_1,c_2,c_3,\kappa$ depend on the GLM, the population vector $\theta^*$, $\alpha$ and $\beta$, but are independent of $n$. As $n\rightarrow\infty$ this bound holds with probability converging to one.

As in the proof of \citet[Corolalary 9.28]{wainwright2019High}, it follows from the union bound and standard sub-Gaussian tail bounds that
\[
\Pr\left(2\max_{g\in\mathcal{G}}\|(\nabla\L(\theta^*))_g\|_2 \le \kappa \sqrt{a_n}\right)\ge 1-2e^{-n\delta^2},
\]
where 
\[
    a_n\ge \frac{32B^2C^2}{\kappa^2}\left(\frac{m}{n}\log 5 + \frac{\log G}{n} + \delta^2\right) = O \left(\frac{m}{n} + \frac{\log G}{n}\right),
\]
for adequate choice of $\delta$. Note that if $\log G\rightarrow\infty$ as $n\rightarrow\infty$, the choice $\delta^2 = (\log 5-1)\log(G)/n$ ensures $1-2e^{-n\delta^2}\rightarrow 1$, meaning that (A3) holds with probability converging to one. On the other hand, if $G$ is fixed, we might choose $\delta^2 = O(\log(n)/n)$ which yields $a_n\propto (m+\log G)\log(n)/n$.

Finally, let $\M_* := \M(S_*)$ denote the true model subspace, with support $S_*\subseteq\mathcal{G}$. Assumption (A4) is equivalent to 
\[
\min_{g\in S_*} \|\theta^*_g\|_2 \ge \frac{2}{\kappa}\sqrt{|S_*|a_n}.
\]

In order to apply Theorem \ref{thm:modselprob}, we also require that 
\[
\psi_n^2\le \frac{\kappa}{8c_1}\frac{n}{m + log G} \wedge \frac{\kappa}{4 a_n}.    
\]
This condition is satisfied if $a_n\psi_n^2\rightarrow 0$. In this case, Theorem \ref{thm:MSC} ensures the GIC is model selection consistent. The method is also pathconsistent if we take $\lambda = \kappa\sqrt{a_n}$, $\xi_n \propto \sqrt{a_n|S_*|}$ and construct 
\[
    \M_\lambda:=\argmax_{S\subseteq\mathcal{G}}\left\{\|\hat\theta(\lambda)_{S}\|_2 \Big\vert  \,\forall\,g\in S,~\|\hat\theta(\lambda)_{g}\|_2 > \xi_n\right\}.
\]

\subsection{Low rank matrix regression}

In this example we estimate a low rank matrix regression parameter. It follows developments in \citet[Chapter 10]{wainwright2019High}.

Let $\Z^n = \{(X_i, y_i)\}_{i=1}^n$ be a random sample, where $X_i \in \R^{p_1 \times p_2}$ is a matrix of covariates and $y_i \in \R$ is a response variable. We assume the simplest model, in which every observation pair $(X_i,y_i)$ are linked via the equation:
    \begin{equation*}
        y_i = \langle X_i, \Theta^* \rangle + w_i,
    \end{equation*}
where $\langle A,B\rangle = \trace(A'B)$ is the trace inner product, $w_i$ is a noise variable drawn independently from a zero-mean Normal distribution with variance $\sigma^2$, and $\Theta^* \in \R^{p_1 \times p_2}$ is a matrix of coefficients with rank $r^*<<\min(p_1,p_2)$.

We represent this model more compactly. Define the \textit{observation operator} $\Xfrac: \R^{p_1 \times p_2} \to \R^n$, with elements $[\Xfrac(\Theta^*)]_i := \langle X_i, \Theta^* \rangle$. We write
\begin{equation}\label{eq:matrix_reg_compact}
    y = \Xfrac(\Theta^*) + w,
\end{equation}
where $y \in \R^n$ and $w \in \R^n$ are the vectors formed by stacking the response and noise variables, respectively. We are interested in estimating the matrix of coefficients $\Theta^*$ using least squares, i.e.:
\[   
    \L(\Theta) := \frac{1}{2n}\|y - \Xfrac(\Theta)\|_2^2,
\]
where $\|\cdot\|_2$ is the Euclidean norm in $\R^n$.

Take $\Theta^* \in \R^{p_1\times p_2}$ and assume the matrix $\Theta^*$ has a rank $r_*$ lower than $\min(p_1,p_2)$. For any $\Theta \in \R^{p_1\times p_2}$, denote by $\Row(\Theta) \subseteq \R^{p_2}$ and $\Col(\Theta) \subseteq \R^{p_2}$ the row and column spaces of $\Theta$, respectively. Let $\U$ and $\V$ be a pair of $r$-dimensional subspaces of $\R^{p_1}$ and $\R^{p_2}$, respectively. We define the subspaces
\[
    \M(\U,\V) =\{\Theta \in \R^{p_1\times p_2} \mid \Row(\Theta) \subseteq \V, \Col(\Theta) \subseteq \U\},
\]
and
\[
    \Mbar^\perp(\U,\V) = \{\Theta \in \R^{p_1\times p_2} \mid \Row(\Theta) \subseteq \V^\perp, \Col(\Theta) \subseteq \U^\perp\}.
\]
Notice that, in this case, $\Mbar^\perp \subseteq \M^\perp$, and the sets are generally not equal c.f. \citet{recht2010guaranteed}. For  a fixed pair $(\U,\V)$, we set $\M := \M(\U,\V)$ and $\Mbar^{\perp} := \Mbar^\perp(\U,\V)$. We denote the collection of model subspaces
\[
    \Mcoll = \{(\M(\U,\V),\Mbar^\perp(\U,\V)) \mid \U \subseteq \R^{p_1}, \V \subseteq \R^{p_1} : \Dim(\U) = \Dim(\V) \leq p_1\wedge p_2\},
\]
with pairs $(\M(\U,\V),\Mbar^\perp(\U,\V))$.

The nuclear norm $\|\cdot\|_\sigma$ is defined as 
\[
    \|\Theta\|_\sigma = \sum_{i=1}^{p_1\wedge p_2}\sigma_i(\Theta)\mbox{ with conjugate norm } \vertiii{\Theta}_2 = \max_{i\le p_1\wedge p_2}\sigma_i,
\] 
where $\sigma_i$ is the $i$-th largest singular value of $\Theta$. The conjugate norm is also known as $l_2$-operator norm. For any pair, $(A,B) \in (\M,\Mbar^\perp)$, $A'B = 0$ and $AB' = 0$. \citet{recht2010guaranteed} shows that, under these conditions, the Nuclear norm is additive: $\|A+B\|_\sigma = \|A\|_\sigma + \|B|_\sigma$. Therefore, the nuclear norm is decomposable with respect to any pair of subspaces in $\Mcoll$.

Suppose $\Theta \in \M(\U,\V)$ has $\Dim(\Col(\Theta)) = \Dim(\Row(\Theta)) \leq r$, it follows from the Cauchy-Schwartz inequality
\begin{align*}
    \|\Theta\|_\sigma 
    &= \sum_{k=1}^{p}\sigma_k(\Theta)\\
    &\leq \left(\sum_{k \in \{1,\dots,p\} : \sigma_k(\Theta) \neq 0} 1^2\right)^{\frac{1}{2}}
        \left(\sum_{k=1}^p \sigma_k(\Theta)^2\right)^{\frac{1}{2}} \\
    &\leq \sqrt{\Rank(\Theta)}\vertiii{\Theta}_F\\
    &\leq \sqrt{r}\vertiii{\Theta}_F,
\end{align*}
 with equality holding when $\Theta = USV'$, where $S = \diag(1,\dots,1,0,\dots,0)$ with $r$ entries one and $p-r$ entries zero, and $U \in \R^{p_1 \times p}, V \in \R^{p_2 \times p}$ are orthonormal matrices, with the first $r$ columns of both matrices spanning $\U$ and $\V$, respectively. Therefore we conclude that,
     \[\Psi(\M(U,V)) = \sqrt{r}.\]

\citet{wainwright2019High}[Chapter 10] shows that,for some constants $\kappa/2<1<c_0$, the proposed loss function satisfies
\[
    \L(\Delta + \Theta^*) - \L(\Theta^*) -\langle \nabla\L(\Theta^*) , \Delta \rangle \geq \frac{\kappa}{2}\vertiii{\Delta}_F^2 - c_0\frac{p_1+p_2}{n}\vertiii{\Delta}_\sigma^2,
\]
for all $\Delta \in \R^{p_1 \times p_2}$. In other words, the squared loss satisfy the RSC with curvature $\kappa/2$, tolerance $\tau_n = c_0 (p_1+p_2)/n$ and radius $\eta_n\in \R$. This bound holds with probability at least $1-({1-e^{\frac{n}{32}}})^{-1}$ in the case where the matrix regressors $X_i$ are independent and identically distributed with entries drawn independently from a standard Gaussian distribution. This artificial example can be directly generalized to encompass entries with Gaussian dependence. 

Following same arguments in \citet{wainwright2019High}[Corollary 10.10], under the assumptions above,
\begin{equation*}
    \Pr\left(2\vertiii{\nabla\L(\Theta^*)}_2 \leq \kappa\sqrt{a_n}\right) \geq 1- e^{-\frac{n}{8}}-2^{-n\delta^2},
\end{equation*}
where
\[
a_n \ge \frac{32\sigma^2}{\kappa^2}\left(\frac{p_1+p_2}{n}log 9 + \delta^2\right) = O\left(\frac{p_1+p_2}{n}\right),
\]
for adequate choice of $\delta^2$. Note that if $p_1+p_2\rightarrow\infty$ as $n\rightarrow\infty$ the choice $\delta^2 = (\log 9 -1)(p_1+p_2)/n$ ensures $1- e^{-n/8}-2^{-n\delta^2}\rightarrow 1$, meaning that Assumption (A3) is satisfied with probability converging to one. On the other hand, if $p_1+p_2$ is fixed, we might choose $\delta^2 \propto (p_1+p_2)\log(n)/n$.

Finally, let $\sigma_1^*\ge\cdots\ge\sigma_{r_*}^*>\sigma_{r_*+1}^*= 0 =\cdots = 0$ denote the population singular values of $\Theta^*$. Assumption (A4) is satisfied if 
\[
    \sigma_{r_*}^*\ge \frac{2}{\kappa}\sqrt{a_n r_*}.
\]

In order to apply Theorem \ref{thm:modselprob}, we also require that 
\[
\psi_n^2\le \frac{\kappa}{8c_0}\frac{n}{p_1+ p_2} \wedge \frac{\kappa}{4 a_n}.    
\]
This condition is satisfied if $a_n\psi_n^2\rightarrow 0$. In this case, Theorem \ref{thm:MSC} ensures the GIC is model selection consistent. In this case, Assumption (A4) and (A4') differ. However, the rank of $\Mbar_*$ and $\M_*$ is the same, hence $\Psi(\Mbar_*) = \Psi(\M_*)$. The method is pathconsistent if we take $\lambda = \kappa\sqrt{a_n}$, $\xi_n \propto \sqrt{a_nr_*}$ and construct 
\[
    \M_\lambda:=\argmax_{\M\in\Mcoll}\left\{\vertiii{\hat\Theta(\lambda)_{\M}}_F \Big\vert  \,\forall\,\mathcal{S}\subseteq\M ,~\vertiii{\hat\Theta(\lambda)_\mathcal{S}}_F > \xi_n\right\}.
\]

\section{Proofs}
In this section we prove Theorem \ref{thm:modselprob} and Theorem \ref{thm:pathselection}.

\subsection{Proof of Theorem \ref{thm:modselprob}}   
\begin{proof}
    We first show that the solution $\widehat{\theta}(\widehat{\M})$ must be inside a ball around $\theta^*$, i.e. set $\widehat{\Delta} = \widehat{\theta}(\widehat{\M}) - \theta^*$, $\|\widehat{\Delta}\|\le\eta_n$. 
    
    Let $\theta\in\M\in\mathfrak{M}_{\psi_n^2}$ be such that $\|\Delta\|>\eta_n$, with $\Delta = \theta-\theta^*$. Let $\theta_h = \theta^* + h(\theta - \theta^*)$, and $\Delta_h = \theta_h - \theta^* = h\Delta$. Take $h = \tfrac{\eta_n}{\|\theta - \theta^*\|}$, meaning that $\|\Delta_h\| = \eta_n$. Hence, it follows from the RSC condition
    \begin{align*}
        \L(\theta_h) - \L(\theta^*) &\geq \langle\nabla\L(\theta^*),\Delta_h \rangle + \kappa\|\Delta_h\|^2 - \tau_n^2\Rnorm^2(\Delta_h)\\
        &{\geq} -\Rnorm^*(\nabla\L(\theta^*))\Rnorm(\Delta)h + \kappa h^2\|\Delta\|^2 - \tau_n^2h^2\Rnorm^2(\Delta)\\
        &= -\frac{\Rnorm^*(\nabla\L(\theta^*))\Rnorm(\Delta)\eta_n}{\|\Delta\|} + \kappa \eta_n^2 - \tau_n^2\eta_n^2\frac{\Rnorm^2(\Delta)}{\|\Delta\|^2}\\
        &{\geq} -\frac{\sqrt{a_n}\kappa\Psi(\M + \M_*)\eta_n}{2} + \kappa \eta_n^2 - \tau_n^2\eta_n^2\Psi^2(\M + \M_*)\\
        &\geq -\sqrt{a_n\psi_n^2}\kappa\eta_n + \kappa \eta_n^2 - 4\tau_n^2\eta_n^2\psi_n^2\\
        &\geq -\sqrt{a_n\psi_n^2}\kappa\eta_n + \kappa \eta_n^2\left\{1 - 4\frac{\tau_n^2}{\kappa}\psi_n^2 \right\}\\
        &\geq -\sqrt{a_n\psi_n^2}\kappa\eta_n + \frac{\kappa \eta_n^2}{2}\\
        &\geq \frac{\kappa \eta_n^2}{2}\left\{1 - \frac{\sqrt{4a_n\psi_n^2}}{\eta_n}\right\}\\
        &\geq \frac{\kappa \eta_n^2}{4}.
    \end{align*}
    
    Recall $\L$ is a convex function of $\theta$, so $h(\L(\theta)-\L(\theta^*)) \geq \L(\theta_h)-\L(\theta^*) \geq  \frac{\kappa \eta_n^2}{4}$. Combining this bound with Assumption (A1) and $16\psi_na_n\le \kappa\eta_n^2$,
    \begin{align*}
    \L(\theta) - \L(\theta^*) + a_n (\Psi^2(\M) - \Psi^2(\M_*))
        &\geq \frac{\kappa \eta_n^2}{4}-a_n\Psi^2(\M_*)\\ 
        &\geq \frac{\kappa \eta_n^2}{4}\left\{1 - 4\frac{\psi_na_n}{\kappa \eta_n^2}\right\}\\
        &\geq \frac{\kappa \eta_n^2}{8} > 0.
    \end{align*}
    
    Write
    \begin{align*}
        GIC_{a_n}(\M) &- GIC_{a_n}(\M_*) = \L(\widehat{\theta}_\M)-\L(\widehat\theta_{\M_*}) + a_n(\Psi^2(\M)-\Psi^2(\M_*))\\
        &\ge \L(\widehat{\theta}_\M)-\L(\theta^*) + a_n(\Psi^2(\M)-\Psi^2(\M_*),
    \end{align*}
    given $\L(\widehat\theta_{\M_*}) \le \L(\theta^*)$. 
    
    If $\widehat{\theta}(\widehat{\M})$ is outside the ball around $\theta^*$, then $GIC_{a_n}(\widehat{\M}) - GIC_{a_n}(\M_*) >0$. It contradicts the premise that $\widehat{\M}$ minimizes the GIC. Therefore it cannot true that $\left\|\widehat{\theta}(\widehat{\M}) - \theta^*\right\| > \eta_n$.

    Now, given we know the solution is inside a ball around $\theta^*$, we show that the solution must be in $\M_*$. Let $\M\in\mathfrak{M}_{\psi_n^2}$ be arbitrary, $\theta\in\M$ with $\|\theta-\theta_*\|\le \eta_n$. It follows from Assumption (A2) (Restricted Strong Convexity) with $\Delta = \theta - \theta^*$ 
    \begin{align*}
        \L(\theta)-\L(\theta^*) 
        &\ge \langle\nabla\L(\theta^*),\Rnorm(\Delta)\rangle +\kappa\|\Delta\|^2 -\tau_n^2\Rnorm(\Delta)^2 \\
        &\ge -\Rnorm^*(\nabla\L(\theta^*))\Rnorm(\Delta) +\kappa\|\Delta\|^2 -\tau_n^2\Rnorm(\Delta)^2 \\
        &\ge -\frac{\kappa \sqrt{a_n}}{2}\Rnorm(\Delta) +\kappa \|\Delta\|^2-\tau_n^2\Rnorm(\Delta)^2\\
    \end{align*}
    Split $\M+\M_*$ into four orthogonal subspaces (with respect to $\langle\cdot,\cdot\rangle$) denoted $M_1,...,M_4$. Then, for all $\gamma \in \R^p$, $\|\gamma\|^2 = \sum_{i=1}^4\|\gamma_{M_i}\|$. Recall that $\widehat{\theta}_{\M^\perp} = \bs{0}$ and $\theta^*_{\M_*^\perp} = 0$, setting
    \[\M + \M_* = \underbrace{(\M \cap \M_*)}_{M_1} \oplus \underbrace{(\M^\perp \cap \M_*)}_{M_2} \oplus \underbrace{(\M \cap \M_*^\perp)}_{M_3} \oplus \underbrace{(\M^\perp \cap \M_*^\perp)}_{M_4},\]
    yield 
    $\Delta_{M_1} = \widehat{\theta}_\M - \theta^*_{\M_*}$, $\Delta_{M_2} = \theta^*_{\M_*}$,  $\Delta_{M_3} = \widehat{\theta}_{\M}$ and 
    $ \Delta_{M_4} = 0$. Moreover, 
    \begin{align*}
        \L(\widehat{\theta}_\M) &-\L(\theta^*) + a_n(\Psi^2(\M)-\Psi^2(\M_*))\\
        & \ge   \underbrace{-\frac{\kappa \sqrt{a_n}}{2}\Rnorm(\Delta_{M_1}) +\kappa \|\Delta_{M_1}\|^2-3\tau_n^2\Rnorm(\Delta_{M_1})^2}_{I}\\
        & \underbrace{-\frac{\kappa \sqrt{a_n}}{2}\Rnorm(\Delta_{M_2}) +\kappa\|\Delta_{M_2}\|^2 -3\tau_n^2\Rnorm(\Delta_{M_2})^2 - a_n\Psi^2(\M_*)}_{II}\\
        & \underbrace{-\frac{\kappa \sqrt{a_n}}{2}\Rnorm(\Delta_{M_3}) +\kappa\|\Delta_{M_3}\|^2 -3\tau_n^2\Rnorm(\Delta_{M_3})^2 +a_n\Psi^2(\M)}_{III}
    \end{align*}
    
    We must bound $I - III$ individually. It follows from the definition of the compatibility constant that $\Rnorm(\Delta_{M_1})^2 \le \|\Delta_{M_1}\|^2\Psi^2(\M_*)$ and under assumption $\kappa - 3\tau_n^2\Psi^2(\M_*) \ge \kappa/2$. Therefore
    \begin{align*}
        I &= -\frac{\kappa \sqrt{a_n}}{2}\Rnorm(\Delta_{M_1}) +\kappa \|\Delta_{M_1}\|^2-3\tau_n^2\Rnorm(\Delta_{M_1})^2\\
        &\ge -\frac{\kappa \sqrt{a_n}}{2}\Psi(M_1)\|\Delta_{M_1}\| +\frac{\kappa}{2}\|\Delta_{M_1}\|^2\\
        &\ge - \frac{\kappa a_n}{8}\Psi^2(\M_*\cap\M) \ge - \frac{\kappa a_n}{8}\Psi^2(\M) .
    \end{align*}
    
    Similarly, $\Rnorm(\Delta_{M_2})^2 \le \|\Delta_{M_2}\|^2\Psi^2(\M_*)$, $\kappa - 3\tau_n^2\Psi^2(\M_*) \ge \kappa/2$  and Assumption (A4) yield
    \begin{align*}
        II & = -\frac{\kappa \sqrt{a_n}}{2}\Rnorm(\Delta_{M_2}) +\kappa\|\Delta_{M_2}\|^2 -3\tau_n^2\Rnorm(\Delta_{M_2})^2 - a_n\Psi^2(\M_*)\\
        &\ge -\frac{\kappa \sqrt{a_n}}{2}\Psi(\M_*)\|\theta^*_{\M^*}\| +\frac{\kappa}{2}\|\theta^*_{\M_*}\|^2 - a_n\Psi^2(\M_*)\\
        &= \frac{\kappa}{2}\|\theta^*_{\M_*}\|^2\left[1-\frac{\sqrt{a_n}\Psi^2(\M_*)}{\|\theta^*_{\M_*}\|} - \frac{2a_n\Psi^2(\M_*)}{\kappa\|\theta^*_{\M_*}\|^2}\right]\\
        &\overset{(A4)}{\ge} 2\left(\frac{1-\kappa}{\kappa}\right)a_n\Psi^2(\M_*).
    \end{align*}
    
    Finally, $\Rnorm(\Delta_{M_3})^2 \le \|\Delta_{M_3}\|^2\Psi^2(\M)$ and $\kappa - 3\tau_n^2\Psi^2(\M) \ge \kappa/2
    $    
    \begin{align*}
        III &= -\frac{\kappa \sqrt{a_n}}{2}\Rnorm(\Delta_{M_3}) +\kappa\|\Delta_{M_3}\|^2 -3\tau_n^2\Rnorm(\Delta_{M_3})^2 +a_n\Psi^2(\M)\\
        &\ge -\frac{\kappa \sqrt{a_n}}{2}\Psi(\M)\|\Delta_{M_3}\| + (\kappa  -3\tau_n^2\Psi^2(\M))\|\Delta_{M_3}\|^2 +a_n\Psi^2(\M)\\
        &\ge -\frac{\kappa \sqrt{a_n}}{2}\Psi(\M)\|\Delta_{M_3}\| +\frac{\kappa}{2}\|\Delta_{M_3}\|^2 +a_n\Psi^2(\M)\\
        &\ge \left(1-\frac{\kappa}{8}\right)a_n\Psi^2(\M).
    \end{align*}
    
    Now suppose $\kappa \le 2/3$, then
    \[
    I+II+III \ge \left(\frac{8-\kappa}{16}\right)a_n 2(\Psi^2(\M_*)+\Psi^2(\M)) \ge  \left(\frac{8-\kappa}{16}\right)a_n\Psi^2(\M_*+\M).
    \]

    Note that showing $\widehat{\M}=\M_*$ is equivalent to showing that $GIC_{a_n}(\M_*) < GIC_{a_n}(\M)$ for all $\M\in\mathfrak{M}_{\psi_n^2}\setminus\{\M_*\}$. Hence, the result follows.
    
\end{proof}

\subsection{Proof of Theorem \ref{thm:pathselection}}
\begin{proof}
    Under Assumption (A3), $\lambda\ge\kappa\sqrt{a_n}$ implies $2\Rnorm^*(\nabla\L(\theta^*)) \le \lambda$. 
    It follows from \cite[Lemma 1]{sNpRmWbY2012} (or \cite[Proposition 9.13]{wainwright2019High}) that $\widehat\Delta_\lambda = \widehat{\theta}(\lambda)-\theta^*$ is inside
    \[\mathbb{C}(\M,\bar{\M}^{\perp};\theta^*) = \left\{\Delta\in\R^p\vert \Rnorm(\Delta_{\bar{\M}^\perp}) \le 3\Rnorm(\Delta_{\bar{\M}}) + 4\Rnorm(\theta^*_{\M^\perp})\right\},  \]
    where $\Rnorm$ is decomposable with respect to $(\M,\bar{\M}^\perp)$. 
    Under the sparsity assumption, $\Rnorm(\Delta)\le 4\Rnorm(\Delta_{\bar{\M}_*})\le\Psi(\bar{\M}_*)\|\Delta_{\bar{\M}_*}\|\le 4\Psi(\bar{\M}_*)\|\Delta\|$, for all $\Delta\in\mathbb{C}(\M_*,\bar{\M}_*^\perp;\theta^*)$.

    First we show that $\|\widehat\Delta_\lambda\|\le\eta_n$ using arguments parallel to those in the proof of Theorem \ref{thm:MSC}. Select $\Delta\in\mathbb{C}(\M_*,\bar{\M}_*^\perp;\theta^*)$ such that $\|\Delta\|>\eta_n$ and define $\Delta_h = \theta^* + h\Delta$ with $h = \eta_n/\|\Delta\|$. It follows from the RSC and reverse triangle inequality
    \begin{align*}
        \L(\theta_h) &-\L(\theta^*) + \lambda(\Rnorm(\theta_h)-\Rnorm(\theta^*))\\
        & \ge -\Rnorm^*(\nabla\L(\theta^*))\Rnorm(\Delta_h) + \kappa\|\Delta_h\|^2 - \tau_n^2\Rnorm^2(\Delta_h) - \lambda\Rnorm(\Delta_h)\\
        & \ge -6\lambda\Psi(\bar{\M}_*)\eta_n + (\kappa - 16\tau_n^2\Psi^2(\bar{\M}_*))\eta_n^2>0, 
    \end{align*}
    provided $\eta_n > 12 \lambda\Psi(\bar{\M}_*)/\kappa$, and  $16\tau_n^2\Psi^2(\bar{\M}_*)<\kappa/2$. Convexity of $\mathcal{F}(\theta) = \L(\theta) -\L(\theta^*) + \lambda(\Rnorm(\theta)-\Rnorm(\theta^*))$ yields a contradiction as $\widehat{\theta}(\lambda)$ is a minimizer. Hence, it must be true that $\|\widehat\Delta_\lambda\|\le\eta_n$

    
    Recall $\Rnorm(\cdot)$ is decomposable with respect to $(\M_*,\bar{\M}_*^\perp)$ with $\M_*\subseteq\bar{\M}$, and that $\theta^*_{\M_*^\perp} = \bs{0}$. It follows from RSC and \citet[Lemma 9.14]{wainwright2019High}
    \begin{align*}
        \L(\theta) &-\L(\theta^*) + \lambda(\Rnorm(\theta)-\Rnorm(\theta^*))\\
        & \ge\langle\nabla\L(\theta^*),\Delta\rangle + \kappa\|\Delta\|^2 -\tau_n^2\Rnorm^2(\Delta) +\lambda(\Rnorm(\Delta_{\bar{\M}_*^\perp})-\Rnorm(\Delta_{\M_*}))\\
        & \ge  -\Rnorm^*(\nabla\L(\theta^*))\Rnorm(\Delta_{\bar{\M}_*}) + (\kappa-16\tau_n^2\Psi^2(\bar{\M}_*))\|\Delta_{\bar{\M}_*}\|^2 -\lambda\Rnorm(\Delta_{\bar{\M}_*})\\
        &\quad -\Rnorm^*(\nabla\L(\theta^*))\Rnorm(\Delta_{\bar{\M}_*^\perp}) + \kappa\|\Delta_{\bar{\M}_*^\perp}\|^2 +\lambda\Rnorm(\Delta_{\bar{\M}_*^\perp})\\
        & \ge \underbrace{-\frac{3}{2}\lambda\Rnorm(\Delta_{\bar{\M}_*}) + \frac{\kappa}{2}\|\Delta_{\bar{\M}_*}\|^2}_{I} +\underbrace{\frac{\lambda}{2}\Rnorm(\theta_{\bar{\M}_*^\perp}) + \kappa\|\theta_{\bar{\M}_*^\perp}\|^2}_{II}.
    \end{align*}

    We will show that $\max_{F^\perp\subseteq\bar\M^\perp_*}\|\widehat{\theta}(\lambda)_{F^\perp}\|\le \frac{c_1}{\kappa}\lambda\Psi(\bar\M_*)$. Let $F^\perp\subseteq\bar{\M}_*^\perp$ be such that $\|\theta_{F^\perp}\|>\frac{c_1}{\kappa}\lambda\Psi(\bar{\M}_*)$, for $c_1>3/2\sqrt{2}$. A simple lowed bound on the quadratic equation yields $I\ge-\frac{9}{8\kappa}\lambda^2\Psi^2(\bar\M_*)$. Then,
    \begin{align*}
        \L(\theta) &-\L(\theta^*) + \lambda(\Rnorm(\theta)-\Rnorm(\theta^*))\\
        &\ge -\frac{9}{8\kappa}\lambda^2\Psi^2(\bar\M_*) + \frac{\lambda}{2}\Rnorm(\Delta_{\bar{\M}_*^\perp}) + \kappa\|\Delta_{\bar{\M}_*^\perp}\|^2\\
        &\ge -\frac{9}{8\kappa}\lambda^2\Psi^2(\bar\M_*) + \frac{\lambda}{2}\Rnorm(\theta_{F^\perp}) + \kappa\|\theta_{F^\perp}\|^2\\
        &\ge -\frac{9}{8\kappa}\lambda^2\Psi^2(\bar\M_*) + \frac{c_1}{\kappa}\lambda^2\Psi^2(\bar\M_*)\\
        &\ge \frac{\lambda^2\Psi^2(\bar{\M}_*)}{\kappa}\left(c_1^2 - \frac{9}{8}\right) >0,
    \end{align*}
    which cannot be true for the minimizer $\widehat{\theta}(\lambda)$, showing that \begin{equation}\label{eq:ubthetahat}
        \max_{F^\perp\subseteq\bar\M^\perp_*}\|\widehat{\theta}(\lambda)_{F^\perp}\|\le \frac{3}{2\sqrt{2}\kappa}\lambda\Psi(\bar\M_*).
    \end{equation}
    
    Now we show that $\min_{F\in\Mcoll:F\in\M_*}\|\widehat{\theta}(\lambda)_F\| \ge \frac{c'}{\kappa}\lambda\Psi(\bar\M_*)$. Let $F\subseteq\M_*$ satisfy $\|\Delta_F\| > \frac{c_2}{\kappa}\lambda\Psi(\bar{\M}_*)$, for $c_2 \ge \frac{3}{2}(1+\sqrt{2})$. Once again, because $II\ge 0$,
    \begin{align*}
        \L(\theta) &-\L(\theta^*) + \lambda(\Rnorm(\theta)-\Rnorm(\theta^*))\\
        &\ge -\frac{3}{2}\lambda\Rnorm(\Delta_{\bar{\M}_*}) + \frac{\kappa}{2}\|\Delta_{\bar{\M}_*}\|^2\\
        &\ge -\frac{3}{2}\kappa\sqrt{a_n}\left(\Rnorm(\Delta_{\bar{\M}_*\cap F}) +\Rnorm(\Delta_{\bar{\M}_*\cap F^\perp})\right) + \frac{\kappa}{2}\left(\|\Delta_{\bar{\M}_*\cap F}\|^2+ \|\Delta_{\bar{\M}_*\cap F^\perp}\|^2\right)\\
        &\ge -\frac{3}{2}\lambda\Psi(F)\|\Delta_F\| + \frac{\kappa}{2}\|\Delta_F\|^2 - \frac{9}{8} \lambda \Psi^2(\bar{\M}_*\cap F^\perp) >0
    \end{align*}
    under the bound on $\|\Delta_F\|$. It cannot be true for the minimizer $\widehat{\theta}(\lambda)$, hence, $\|\widehat{\Delta}_{\lambda,F} \|\le \frac{c_2}{\kappa} \lambda\Psi(\bar{\M}_*)$. Finally, choose $c_3>c_1+c_2$. The reverse triangle inequality yields $\|\widehat{\theta}(\lambda)_F\|\le \|\theta^*_F\| + \frac{c_2}{\kappa}\lambda\Psi(\bar\M_*)$ and $\|\widehat{\theta}(\lambda)_F\|\ge (c_3-c_2)\frac{\lambda\Psi(\bar\M_*)}{\kappa}$. It follows that there is $c' > c_1$ such that 
    \begin{equation}\label{eq:lbthetahat}
        \min_{F\in\M_*}\|\widehat{\theta}(\lambda)_F\| \ge \frac{c'}{\kappa}\lambda\Psi(\bar\M_*) =: \xi_n,
    \end{equation} 
    as required.

    Finally, we set $\M_\lambda$ the largest subspace $\M\in\Mcoll$ satisfying the bound $\min_{F\in\Mcoll:~\subseteq\M}\|\widehat{\theta}(\lambda)_F\| \ge\xi_n$. 
\end{proof}

\bibliography{mybib}

\end{document}